\documentclass[runningheads]{llncs}
\usepackage{amsmath,amsfonts}
\pdfoutput=1
\usepackage{xcolor}
\usepackage{nicefrac}
\usepackage{proof}
\usepackage{thmtools}
\usepackage{thm-restate}

\makeatletter
\let\MYcaption\@makecaption
\makeatother
\usepackage{subcaption}
\captionsetup{compatibility=false}
\usepackage{graphicx}

\allowdisplaybreaks[1]

\newcommand\Sec[1] {Sect.~\ref{#1}}

\newcommand\Prop[1]{Proposition~\ref{#1}}

\newcommand\Def[1] {Definition~\ref{#1}}

\newcommand\Fig[1] {Fig.~\ref{#1}}

\newcommand{\eqdef}{\ensuremath{\stackrel{\mathrm{def}}{=}}}

\newcommand{\Dists}{\mathbb{D}} 
\newcommand{\Prob}{\mathrm{Pr}}
\renewcommand{\Pr}{\operatornamewithlimits{\mathrm{Pr}}}

\newcommand{\expect}{\operatornamewithlimits{\mathbb{E}}}

\newcommand{\reals}{\mathbb{R}}
\newcommand{\realspos}{\mathbb{R}_{>0}}
\newcommand{\realsnng}{\mathbb{R}_{\ge0}}
\newcommand{\sphere}{\mathbb{S}}

\newcommand{\calh}{\mathcal{H}}

\newcommand{\calr}{\mathcal{R}}
\newcommand{\cals}{\mathcal{S}}

\newcommand{\calu}{\mathcal{U}}
\newcommand{\calv}{\mathcal{V}}

\newcommand{\calx}{\mathcal{X}}
\newcommand{\caly}{\mathcal{Y}}

\newcommand{\bm}[1]{{\mbox{\boldmath $#1$}}}

\newcommand{\bmr}{\bm{r}}
\newcommand{\bmv}{\bm{v}}

\newcommand{\bmx}{\bm{x}}
\newcommand{\bmy}{\bm{y}}

\newcommand{\DP}[0]{\textsf{DP}}
\newcommand{\XDP}[0]{\textsf{XDP}}

\newcommand{\PXDP}[0]{\textsf{PXDP}}
\newcommand{\LDP}[0]{\textsf{LDP}}
\newcommand{\CDP}[0]{\textsf{CDP}}
\newcommand{\CXDP}[0]{\textsf{CXDP}}

\newcommand{\cosdis}[0]{\mathit{d_{\theta}}}

\newcommand{\deuc}[0]{\mathit{d}_{\rm euc}}

\newcommand{\dV}[0]{\mathit{d}_{\calv}}
\newcommand{\dX}[0]{\mathit{d}_{\calx}}

\newcommand{\distH}[0]{D_{\!\calh}}

\newcommand{\hproj}[0]{\mathit{h}_{\sf proj}}
\newcommand{\Hproj}[0]{\mathit{H}_{\sf proj}}

\newcommand{\Bino}[0]{\mathit{Binomial}}

\newcommand{\DKL}[0]{\mathit{D}_{\rm KL}}

\newcommand{\alg}{\mathit{A}}

\newcommand{\Qlap}{\mathit{Q}_{\sf Lap}}
\newcommand{\Qrr}{\mathit{Q}_{\sf rr}}
\newcommand{\Qbrr}{\mathit{Q}_{\sf brr}}

\newcommand{\Qlaplsh}{\mathit{Q}_{\sf LapLSH}}
\newcommand{\QLapH}{\mathit{Q}_{{\sf Lap}H}}
\newcommand{\Qlsh}{\mathit{Q}_{\sf LSHRR}}

\newcommand{\hash}[0]{\mathit{h^{*}}}
\newcommand{\deh}[0]{d_{\varepsilon H}}

\newcommand{\Loss}[1]{\mathcal{L}_{#1}}

\newif\ifcommentson\commentsonfalse

\newif\ifconferenceon\conferenceonfalse
\ifconferenceon
\newcommand{\arxiv}[1]{}
\newcommand{\conference}[1]{#1}
\newcommand{\conferenceShort}[1]{}
\else
\newcommand{\arxiv}[1]{#1}
\newcommand{\conference}[1]{}
\newcommand{\conferenceShort}[1]{}
\fi
\newcommand{\journalversion}[1]{}

\newcommand{\ourtablesize}[0]{.75\textwidth}

\newcommand{\myqed}[0]{\qed}

\begin{document}
\title{Locality Sensitive Hashing with Extended Differential Privacy 
\thanks{The authors are ordered alphabetically.
This work was supported by the French-Japanese project LOGIS within the Inria Equipes Associ\'{e}es program,
by an Australian Government RTP Scholarship (2017278),
by ERATO HASUO Metamathematics for Systems Design Project (No. JPMJER1603), JST, and 
by JSPS KAKENHI Grant Number JP19H04113.
}}
\titlerunning{LSH with Extended Differential Privacy}
\author{
Natasha Fernandes\inst{1,2}
\and
Yusuke Kawamoto\inst{3}
\and
Takao Murakami\inst{3}
}
\institute{
Macquarie University, Sydney, Australia
\and
Inria, \'{E}cole Polytechnique, IPP, Palaiseau, France
\and
National Institute of Advanced Industrial Science \\ 
and Technology (AIST), Tokyo, Japan \\
}
\maketitle              
\begin{abstract}
Extended differential privacy, a generalization of standard differential privacy (DP) using a general metric, has been widely studied to provide rigorous privacy guarantees while keeping high utility. 
However, existing works on extended DP are limited to few metrics, such as the Euclidean metric.
Consequently, they have only a small number of applications, such as location-based services and document processing. 

In this paper, we propose a couple of mechanisms providing extended DP with a different metric: \emph{angular distance} (or \emph{cosine distance}). 
Our mechanisms are based on locality sensitive hashing (LSH), which can be applied to the angular distance and work well for personal data in a high-dimensional space. 
We theoretically analyze the privacy properties of our mechanisms, and prove extended DP for input data by taking into account that LSH preserves the original metric only approximately. 
We apply our mechanisms to friend matching based on high-dimensional personal data with angular distance in the local model, and evaluate our mechanisms using two real datasets. 
We show that LDP requires a very large privacy budget and that RAPPOR does not work in this application.
Then we show that our mechanisms enable friend matching with high utility and rigorous privacy guarantees based on extended DP.

\keywords{Local differential privacy \and locality sensitive hashing \and angular distance \and extended differential privacy}
\end{abstract}

\section{Introduction}
\label{sec:intro}
\emph{Extended differential privacy} (extended \DP{}), a.k.a. \emph{$\dX$-privacy} \cite{Chatzikokolakis:13:PETS}, is 
a privacy notion that provides rigorous privacy guarantees while enabling high utility. 
Extended \DP{} is a generalization of standard \DP{} \cite{Dwork:06:ICALP,Dwork:06:TCC} in that the adjacency relation (regarded as the Hamming distance) is generalized to a metric. 
A well-known application is geo-indistinguishability 
\cite{Alvim:18:CSF,Andres:13:CCS,Bordenabe:14:CCS}, 
an instance of extended \DP{} for two-dimensional Euclidean space. 
Geo-indistinguishability guarantees that a user's location is indistinguishable from any location within a certain radius (e.g., 
within 5km) in the local model, in which each user obfuscates her own data and sends it to a data collector. 
It can also be regarded as a \emph{relaxation} of \DP{} in the local model (local \DP{} or \LDP{} \cite{Duchi:13:FOCS}) to make two locations within a certain radius indistinguishable (whereas \LDP{} makes arbitrary locations indistinguishable).
Consequently, 
extended \DP{}
results in 
much higher utility than \LDP{}, 
e.g., for a task of estimating 
geographic population distributions \cite{Alvim:18:CSF}.

Since extended \DP{} is defined using a general metric, it can potentially have a wide range of applications. 
However, the range of actual applications is limited by the particular metrics for which extended \DP{} mechanisms have been designed.
For example, the existing works on locations 
\cite{Alvim:18:CSF,Andres:13:CCS,Bordenabe:14:CCS}, 
documents \cite{Fernandes19:POST}, 
range 
queries \cite{Xiang_ISIT20}, 
and linear queries 
\cite{Kamalaruban_PoPETs20} 
are designed for
the Euclidean metric, 
the Earth Mover's metric, 
the $l_1$ metric, 
and the summation of privacy budgets for attributes, 
respectively. 
However, there have been no known extended \DP{} mechanisms designed for the angular distance (or cosine distance).

For example, consider friend matching (or friend recommendation) based on personal data (e.g., locations, rating history) 
\cite{Brendel_WWW18,chen2015preserving,cheng2018efficient,Li_IoTJ17,liu2016linkmirage,ma2018armor,narayanan2011location,samanthula2015privacy}. 
In the case of locations, 
we can create a vector of visit-counts where each value is the visit-count on the corresponding Point of Interest (POI). 
Users with similar vectors have a high probability of establishing new friendships~\cite{Yang_WWW19}. 
Therefore, we can use the POI vector to recommend a new friend. 
Similarly, 
we 
can 
recommend a new friend based on the similarity of their item rating vectors,
since this identifies users with similar interests \cite{RecommenderSystems_book}. 
Because the distance between vectors in such applications is usually given by the angular distance (or equivalently, the cosine distance) \cite{RecommenderSystems_book}, 
the angular distance is a natural choice 
for the utility measure and the metric for extended \DP{}.

In this paper, we focus on friend matching in the local model, and 
propose 
two mechanisms
providing extended \DP{} 
with the angular distance. 
Our mechanisms are 
based on \emph{locality sensitive hashing} (LSH) \cite{Gionis_VLDB99,Wang_ProcIEEE16}, 
which 
can be applied 
to a wide range of metrics including the angular distance. 
Our first mechanism, LapLSH, uses the 
multivariate Laplace mechanism \cite{Fernandes19:POST} to generate noisy vectors, 
and then hashes them into buckets using LSH as post-processing. 
Our second mechanism, LSHRR, 
embeds personal data 
into a binary vector 
using LSH, and then applies Warner's randomized response \cite{Warner_JASA65} for each bit of the binary vector. 

The privacy analysis of extended \DP{} is challenging especially for LSHRR. 
This is because LSH does not precisely preserve 
the original metric; 
it \emph{approximates} the original 
metric 
via hashing. 
We theoretically analyze the privacy properties of our mechanisms, 
showing
that they provide extended \DP{} for the input.
We also note that 
much existing work on privacy-preserving LSH \cite{aghasaryan2013use,Chen_WCN19,qi2017distributed} 
fails to provide rigorous guarantees about user privacy. 
We point out, using a toy example, how the lack of rigorous guarantees can lead to privacy 
breaches.

We 
evaluate our mechanisms 
using two real datasets. 
We show that \LDP{} requires a very large privacy budget $\epsilon$. 
This comes from the fact that \LDP{} expresses an upper bound on the privacy guarantee for all inputs.
In contrast, extended \DP{} is a finer-grained notion than \LDP{} in that it describes the privacy guarantee for inputs at various distances. 
In fact, we show that extended \DP{} 
enables 
friend matching with a much smaller privacy budget than \LDP{} for close inputs.

We also 
explain 
why 
RAPPOR \cite{Erlingsson_CCS14} and the generalized RAPPOR \cite{Wang_USENIX17}, which are state-of-the-art 
\LDP{} 
mechanisms, 
cannot be applied
(either completely lose utility or are computationally infeasible) to friend matching. 
In short, the Bloom filter used in RAPPOR is not a metric-preserving hashing, and therefore cannot guarantee utility w.r.t. the metric distance between user vectors. This is further elaborated in \Sec{sub:RAPPOR}.

\smallskip
\noindent{\textbf{Contributions.}}
Our 
main contributions 
are as follows:
\begin{itemize}
\item We propose 
two 
mechanisms 
providing extended \DP{} with the angular distance: 
LapLSH and LSHRR. 
We show that LSH itself does not provide privacy guarantees and could result in complete privacy collapse in some situations. 
We then prove that our 
mechanisms 
provide 
rigorous guarantees of extended \DP{}.
In particular, we show that the distribution of the LSHRR's privacy loss can be characterized as extended notions of concentrated \DP{}~\cite{Dwork:16:arXiv} and probabilistic \DP{}~\cite{Machanavajjhala:08:ICDE} with input distance. 
To our knowledge, this work is the first to provide extended \DP{} with the angular distance.
\item We apply our 
mechanisms 
to friend matching based on rating history and locations. 
Then we compare 
LSHRR with LapLSH using 
two real datasets. 
We 
show 
that 
LSHRR provides 
higher (resp.~lower) utility 
than LapLSH for a high-dimensional (resp.~low-dimensional) vector. 
We also show that \LDP{} requires a very large privacy budget $\epsilon$, and RAPPOR does not work for friend matching. 
Finally, we 
show that 
LSHRR provides 
high utility for a high-dimensional vector (e.g., $1000$-dimensional rating/location vector) 
in the medium privacy regime \cite{Acharya_AISTATS19,Ye_ISIT17} of extended \DP{}, and therefore 
enables 
friend matching with rigorous privacy guarantees and high utility.
\end{itemize}

\conference{All proofs on the technical results can be found in the preprint~\cite{Fernandes_arXiv20}.}
\arxiv{All proofs on the technical results can be found in Appendix~\ref{sec:proofs}.}

\section{Related Work}
\label{sec:related}

\subsection{Extended \DP{}}
\label{sub:related:DP_XDP}

As explained in \Sec{sec:intro}, 
there are a number of 
existing extended \DP{} mechanisms 
\cite{Alvim:18:CSF,Andres:13:CCS,Bordenabe:14:CCS,Fernandes19:POST,Kamalaruban_PoPETs20,Xiang_ISIT20} 
designed for other metrics
(e.g., the Euclidean metric, the $l_1$ metric), 
which
cannot be applied to 
the angular distance. 
To our knowledge, our mechanisms are the first to provide extended \DP{} with the angular distance.

In addition, most of the studies on extended \DP{} have 
studied 
low-dimensional 
data 
such as two-dimensional 
\cite{Alvim:18:CSF,Andres:13:CCS,Bordenabe:14:CCS,Kamalaruban_PoPETs20} 
and six-dimensional \cite{Xiang_ISIT20} data.
One exception is the work in \cite{Fernandes19:POST}, 
which proposed the multivariate Laplace mechanism for 
$300$-dimensional 
vectors. 
In this paper, 
we 
apply our mechanisms to 
vectors in $1000$-dimensions 
(much larger than any existing work), and show that our LSHRR provides high utility for such high-dimensional data.

\subsection{Privacy-Preserving Friend Matching}
\label{sub:related:ppfm}
A number of studies 
\cite{Brendel_WWW18,chen2015preserving,cheng2018efficient,Li_IoTJ17,liu2016linkmirage,ma2018armor,narayanan2011location,samanthula2015privacy} 
have been made on algorithms for privacy-preserving friend matching (or friend recommendation). 
Many of them 
(e.g., \cite{cheng2018efficient,ma2018armor,narayanan2011location,samanthula2015privacy}) 
use cryptographic techniques such as homomorphic encryption and secure multiparty computation. 
However, such techniques require high computational costs or focus on specific algorithms, and are not suitable for a more complicated calculation of distance such as the angular distance between two rating/location vectors. 

The techniques in \cite{Brendel_WWW18,chen2015preserving,Li_IoTJ17,liu2016linkmirage}
are based on perturbation. 
The mechanisms in \cite{Brendel_WWW18,Li_IoTJ17,liu2016linkmirage} do not provide
\DP{} or its variant, whereas that in \cite{chen2015preserving} provides \DP{}. 
The technique in \cite{chen2015preserving}, however, is based on social graphs and cannot be applied to our setting, where a user's personal data is represented as a rating vector or visit-count vector. 
Moreover, \DP{}-based friend matching in social graphs can require prohibitive trade-offs between utility and privacy \cite{Brendel_WWW18,Machanavajjhala_PVLDB11}. 

Similarly, \DP{} mechanisms based on each user's high-dimensional
rating/location 
vector require a very large privacy budget (e.g., $\varepsilon \geq 250$ \cite{Liu_RecSys15}, $\varepsilon \geq 2 \times 10^4$ \cite{Murakami_PoPETs21}) to provide high utility.
In contrast,
our extended \DP{} mechanisms provide 
meaningful privacy guarantees in high-dimensional spaces with 
high utility,
since 
extended \DP{} is a finer-grained notion than \DP{}, 
as explained in \Sec{sec:intro}.

We also note that a privacy-preserving clustering algorithm in \cite{Nissim_PMLR18} and an item recommendation algorithm in \cite{Shin_TKDE18} cannot be applied to friend matching. 

\subsection{Privacy-Preserving LSH}
\label{sub:related:PPLSH}
Finally, we note that some studies have proposed privacy-preserving LSH 
\cite{aghasaryan2013use,Aumuller_arXiv20,Chen_WCN19,Hu_CCPE21,Nissim_PMLR18,qi2017distributed,Zhang_ICONIP20}. 
However, 
some of them \cite{aghasaryan2013use,Chen_WCN19,qi2017distributed} only apply LSH and claim that it protects user privacy because LSH is a kind of non-invertible transformation. 
In 
\Sec{sec:prv_prop}, 
we
show
that the lack of rigorous guarantees can lead to privacy breaches.
Nissim and Stemmer \cite{Nissim_PMLR18} proposed clustering algorithms based on LSH and 
the
heavy-hitters algorithm. 
However, their algorithms focus on clustering such as $k$-means clustering and cannot be applied to friend matching.

Aum\"{u}ller \textit{et al.} \cite{Aumuller_arXiv20} proposed 
a privacy-preserving LSH algorithm that can be applied to friend matching. 
Specifically, 
they 
focused on 
a similarity search problem under the Jaccard similarity using up to 2000-dimensional vectors, and proposed an \LDP{} algorithm based on MinHash. 
After the submission of our paper to a preprint \cite{Fernandes_arXiv20}, 
two related papers \cite{Hu_CCPE21,Zhang_ICONIP20} have been published. 
Zhang \textit{et al.} \cite{Zhang_ICONIP20} proposed 
an 
\LDP{} 
algorithm for rating prediction based on MinHash and knowledge distillation. 
Hu \textit{et al.} \cite{Hu_CCPE21} proposed 
an \LDP{} algorithm based on LSH 
for federated recommender system.

Our work 
differs from \cite{Aumuller_arXiv20,Hu_CCPE21,Zhang_ICONIP20} in the following 
points. 
First, \cite{Aumuller_arXiv20,Hu_CCPE21,Zhang_ICONIP20} only analyzed \LDP{} for hashes, and did not conduct a more challenging analysis of extended \DP{} for inputs. 
In contrast, our work provides a careful analysis of extended \DP{}, 
given that LSH preserves the original metric only approximately. 
We also show that extended \DP{} requires a much smaller privacy budget than \LDP{}.
Second, 
we compared LSHRR with LapLSH in detail, and show that LSHRR (resp.~LapLSH) is more suitable for high (resp. low) dimensional data.

\section{Preliminaries}
\label{sec:preliminaries}
In this section, we introduce notations 
and recall background on locality sensitive hashing (LSH), privacy measures, and privacy protection mechanisms.

Let $\deuc$ be the Euclidean distance between real vectors, i.e.,\, $\deuc(\bmx, \bmx') = \|\bmx - \bmx'\|_2$.
We write $\calv$ for the set of all binary data of length $\kappa$, i.e., $\calv = \{0, 1\}^\kappa$.
The \emph{Hamming distance} between $\bmv, \bmv' \in \calv$ is: 
$\dV(\bmv, \bmv') =
\sum_{i=1}^{\kappa} \left|\, v_i - v'_i \,\right|$
{.}

We denote 
the \emph{set of all probability distributions} over a set~$\cals$ by $\Dists\cals$.
Let $N(\mu, \sigma^2)$ be the normal distribution with mean $\mu$ and variance $\sigma^2$.
Let $A: \calx\rightarrow\Dists\caly$ be a randomized algorithm from a finite set $\calx$ to another $\caly$, and $A(x)[y]$ (resp. by $A(x)[S]$) be the probability that $A$ maps $x$ to $y$ (resp. an element of~$S$).

\subsection{Locality Sensitive Hashing (LSH)}
\label{sub:LSH}

We denote by $\calx$ the set of all possible input data.
We introduce 
the notion of a
\emph{(normalized) dissimilarity function} $\dX: \calx\times\calx\rightarrow[0,1]$ over $\calx$ such that
two inputs $\bmx$ and $\bmx'$ have less dissimilarity $\dX(\bmx, \bmx')$ when they are closer, and that $\dX(\bmx, \bmx') = 0$ when $\bmx = \bmx'$. %
If $\dX$ is symmetric and subadditive, it is a metric.

A \emph{locality sensitive hashing (LSH)}~\cite{Gionis_VLDB99} is a family of functions 
in which
the probability of two inputs $\bmx, \bmx'$ having different $1$-bit outputs is proportional to 
$\dX(\bmx, \bmx')$.

\begin{definition}[Locality sensitive hashing]\label{def:LSH}\rm
A \emph{locality sensitive hashing (LSH) scheme} w.r.t. a dissimilarity function $\dX$ is a family $\calh$ of functions from $\calx$ to $\{0,1\}$ coupled with a probability distribution $D_\calh$ such that 
for any $\bmx, \bmx'\in\calx$,
\begin{align}
\Pr_{h\sim \distH}\![ h(\bmx) \neq h(\bmx') ] = \dX(\bmx, \bmx')
{,}
\end{align}
where $h$ is chosen from $\calh$ according to the distribution $\distH$.
By using independently chosen functions $h_1, h_2, \ldots, h_\kappa$,
the \emph{$\kappa$-bit LSH function} $H: \calx \rightarrow\calv$ is:
\begin{align}
H(\bmx) = (h_1(\bmx),\, h_2(\bmx),\, \ldots,\, h_\kappa(\bmx))
{.}
\label{eq:kappa-bit-hash}
\end{align}
We denote by $H^*: \calx \rightarrow\Dists\calv$ the randomized algorithm that draws a $\kappa$-bit LSH $H$ from the distribution $\distH^\kappa$ and outputs the hash value $H(\bmx)$ of a given input~$\bmx$.
\end{definition}

\subsection{Examples of LSHs}
\label{sub:LSH:examples}

There are a variety of LSH families corresponding to useful metrics, such as the angular distance~\cite{Andoni_NIPS15,Charikar_STOC02}, Jaccard metric \cite{Broder_JCSS00}, and $l_p$ metric with $p \in (0,2]$ \cite{Datar_SCG04}.
In this work, we focus on LSH families for the angular distance.

A \emph{random-projection-based hashing} is a one-bit hashing 
with the domain $\calx {\eqdef} \reals^n$
and a random vector $\bmr\in\reals^n$ that defines a hyperplane through the origin.
Formally, 
we define a \emph{random-projection-based hashing} $\hproj: \reals^n {\rightarrow} \{0,1\}$~by:
\begin{align*}
\hproj(\bmx) =
\begin{cases}
    0 & (\text{if } \bmr^\top \bmx < 0) \\[-0.5ex]
    1 & (\text{otherwise})
\end{cases}
\end{align*}
where 
each element of $\bmr$ is independently chosen from the standard normal distribution $N(0, 1)$.
By \eqref{eq:kappa-bit-hash}, a $\kappa$-bit LSH function $\Hproj$ is built from one-bit %
hashes %
${\hproj}_1, \ldots, {\hproj}_\kappa$ that are generated from independent hyperplanes $\bmr_1, \ldots, \bmr_\kappa$.

The random-projection-based hashing $\hproj$ is an LSH w.r.t. 
the \emph{angular distance} 
$d_\theta: \reals^n\times\reals^n\rightarrow[0,1]$ 
defined by:
\begin{align}
\cosdis(\bmx, \bmx') &= {\textstyle \frac{1}{\pi} \cos^{-1}\big(\frac{\bmx^\top \bmx'}{\|\bmx\| \|\bmx'\|}\big)}
\label{def:cosine_distance}
\end{align}
For example, $d_\theta(\bmx, \bmx') = 0$ iff $\bmx = \bmx'$, while $d_\theta(\bmx, \bmx') = 1$ iff $\bmx = -\bmx'$.
$d_\theta(\bmx, \bmx') = 0.5$
exactly when
the two vectors $\bmx$ and $\bmx'$ are orthogonal, namely, $\bmx^\top \bmx' = 0$.

\subsection{Approximate Nearest Neighbor Search}
\label{sub:NN}

We recall the nearest neighbor search (NNS) problem and 
its utility measures.

Given a dataset $S \subseteq \calx$, the \emph{nearest neighbor search (NNS)} for an $x_0\in S$ is the problem of finding the closest $x \in S$ to $x_0$ w.r.t. a metric $\dX$ over $\calx$.
A \emph{$k$-nearest neighbor search ($k$-NNS)} 
is the problem of finding the $k$ closest points.

A naive and exact approach to $k$-NNS is to perform pairwise comparisons of data points, requiring $O(|S|)$ operations.
Approaches to improve this computational inefficiency shift the problem to space inefficiency~\cite{Andoni:18:ICM}. 
An alternative approach%
~\cite{indyk1998approximate} is to employ LSH to perform \emph{approximate} NNS efficiently.
To evaluate the utility, we use 
the average distance of returned nearest neighbors from the data point $x_0$ compared with the average distance of true nearest neighbors. 

\begin{definition}[Utility loss]\label{def:utility-loss}\rm
Let $A$ be an approximate algorithm that produces approximate $k$ nearest neighbors $N \subseteq S$ for a data point $x_0\in S$ in terms of a metric $\dX$.
The \emph{average utility loss} for $N$ w.r.t. the true nearest neighbors $T$ is given by:
$\calu_A(S) ~=~ \nicefrac{1}{k} \sum\limits_{x \in N} \dX(x_0, x) ~-~ \nicefrac{1}{k} \sum\limits_{x \in T} \dX(x_0, x)$.
\end{definition}

\subsection{Privacy Measures and Privacy Mechanisms}
\label{sub:privacy}

\emph{Extended differential privacy}~\cite{Chatzikokolakis:13:PETS,Kawamoto:19:ESORICS} 
guarantees 
that when two inputs $x$ and $x'$ are closer, their corresponding output distributions are less distinguishable.
In this paper, we propose a more generalized definition using a function $\delta$ over $\calx$ and an arbitrary function $\xi$ over $\calx$ instead of a metric.
The main reason for this generalization is that LSH preserves the metric over the input only probabilistically and approximately, hence cannot fit to  \cite{Chatzikokolakis:13:PETS}'s standard definition.

\begin{definition}[Extended differential privacy]\label{def:XDP-simple}\rm
Given two functions $\xi: \calx\times\calx\rightarrow\realsnng$ and $\delta: \calx\times\calx\rightarrow[0,1]$, a randomized algorithm $\alg: \calx \rightarrow \Dists\caly$ provides \emph{$(\xi,\delta)$-extended differential privacy (\XDP{})} if for all $x, x'\in\calx$ and for any $S\subseteq\caly$,
\begin{align*}
\alg(x)[S] \leq e^{\xi(x,x')} \,\alg(x')[S] + \delta(x,x')
{,}
\end{align*}
where the probability is taken over the random choices in $\alg$.
\end{definition}
We abuse notation and write $\delta$ when $\delta(x, x')$ is a constant.
When $\xi(x,x')$ is also a constant $\varepsilon$, the definition gives the (standard) \emph{differential privacy} (\DP{}).
When $\xi(x,x')=\dX(x, x')$ and $\delta(x, x')=0$, the definition gives $\dX$-privacy in \cite{Chatzikokolakis:13:PETS}.
In later sections, we instantiate the metric $\dX$ with the angular distance $d_\theta$.

Finally, we recall 
some
popular privacy protection mechanisms.

\begin{definition}[Laplace mechanism~\cite{Dwork:06:TCC}]\label{def:Laplace-mech}\rm
For an $\varepsilon \in\realspos$ and a metric $\dX$ over $\calx\cup\caly$, the \emph{$(\varepsilon,\dX)$-Laplace mechanism} is the randomized algorithm $\Qlap: \calx\rightarrow\Dists\caly$ that maps an input $x$ to an output $y$ with probability
${\textstyle\frac{1}{c}} \exp(-\varepsilon \dX(x,y))$
where $c = \int_{\caly} \exp(-\varepsilon \dX(x,y))~dy $.
\end{definition}
Examples of the $(\varepsilon,\dX)$-Laplace mechanism include the one-dimensional \cite{Dwork:06:TCC} and the multivariate Laplace mechanism \cite{Fernandes19:POST}, both equipped with the Euclidean metric.
The $(\varepsilon,\dX)$-Laplace mechanism provides $(\varepsilon\dX,0)$-\XDP{}.

\begin{definition}[Randomized response~\cite{Warner_JASA65}]\label{def:RR}\rm
The \emph{$\varepsilon$-randomized response} (\emph{$\varepsilon$-RR}) 
is the randomized algorithm $\Qrr: \{0,1\}\rightarrow\Dists\{0,1\}$ that maps a bit $b$ to another $b'$ 
with probability
$\frac{e^{\varepsilon}}{e^{\varepsilon}+1}$ if $b' = b$,
and with probability
$\frac{1}{e^{\varepsilon}+1}$ otherwise.
\end{definition}
The $\varepsilon$-RR provides $\varepsilon$-\DP{}.
Erlingsson et al.~\cite{Erlingsson_CCS14} introduce 
the \emph{RAPPOR}, 
which first uses a Bloom filter to produce a hash value and then applies the 
RR 
to each bit of the hash value. 
The RAPPOR provides $\varepsilon$-\DP{} 
in the local model.

\section{Privacy Properties of LSH}
\label{sec:prv_prop}
Several works in the literature make reference to the privacy-preserving properties of LSH~\cite{aghasaryan2013use,chow2012practical,qi2017distributed}. The privacy guarantee attributed to LSH mechanisms hinges on its hash function, which `protects' an individual's private attributes by revealing only their hash bucket. We now apply a formal analysis to LSH and explain why LSH implementations do not provide strong privacy guarantees, and could, in some situations, result in complete privacy collapse for the individual.

\smallskip
\noindent{\textbf{Modeling LSH.}}
We present a simple example to show how privacy breaks down. 
Consider the set of secret inputs $\calx {=} \{ (0,1), (1,0), (1,1) \}$ 
whose element represents whether an individual rated two movies $A$ and $B$. 
Then an LSH is modeled as a probabilistic channel $\hash: \calx {\rightarrow} \mathbb{D}\{0, 1\}$ that maps a secret input to a binary observation.

For brevity, we deal with a single random-projection-based hashing $h$ in \Sec{sub:LSH:examples}.
That is, we randomly choose a vector $\bmr$ representing the normal to a hyperplane,
and given an input $\bmx \in\calx$, 
the hash function $h$ outputs $0$ 
if $\bmr^\top \bmx < 0$
and $1$ otherwise.
For example, if $\bmr = (1, -{\textstyle\frac{1}{2}})$ is chosen, $h$ is defined as:
\begin{small}
\begin{align*}
         h: \calx &\rightarrow \{0,1\} \\[-0.3ex]
             (0,1) &\mapsto 0 \\[-0.5ex]
             (1,0) &\mapsto 1 \\[-0.5ex]
             (1,1) &\mapsto 1
\end{align*}
\end{small}
In fact, there are 6 possible (deterministic) hash functions for any choice of the vector $\bmr$, corresponding to hyperplanes that separate different pairs of points:
\begin{center}\begin{small}
  \begin{tabular*}{\ourtablesize}{@{\extracolsep{\fill}}c c c}
     $h_1$                    & $h_2$                   & $h_3$                   \\[-0.4ex]
     $(0,1) \mapsto 1$  & $(0,1) \mapsto 0$ & $(0,1) \mapsto 1$ \\[-0.4ex]
     $(1,0) \mapsto 0$  & $(1,0) \mapsto 1$ & $(1,0) \mapsto 0$ \\[-0.4ex]
     $(1,1) \mapsto 0$  & $(1,1) \mapsto 1$ & $(1,1) \mapsto 1$ \\[-0.4ex]
     \\[-1.9ex]
     $h_4$                   & $h_5$                   & $h_6$                   \\[-0.4ex]
     $(0,1) \mapsto 0$ & $(0,1) \mapsto 1$ & $(0,1) \mapsto 0$ \\[-0.4ex]
     $(1,0) \mapsto 1$ & $(1,0) \mapsto 1$ & $(1,0) \mapsto 0$ \\[-0.4ex]
     $(1,1) \mapsto 0$ & $(1,1) \mapsto 1$ & $(1,1) \mapsto 0$ \\
  \end{tabular*}\end{small}
\end{center}
Each of $h_1$, $h_2$, $h_3$, and $h_4$ occurs with probability $\nicefrac{1}{8}$, while $h_5$ and $h_6$ each occur with probability $\nicefrac{1}{4}$.
The resulting channel $\hash$, computed as the probabilistic sum of these deterministic hash functions, turns out to leak no information on the secret input (i.e., all outputs have equal probability conditioned on each input).

This indicates that the channel $\hash$ is perfectly private.
However, in practice, LSH may require the release of the choice of the vector $\bmr$ (e.g. \cite{chow2012practical})\footnote{In fact, since the channel on its own leaks nothing, there \emph{must} be further information released in order to learn anything useful from this channel.}, 
that is, the choice of hash function is leaked. 
Notice that in our example, $h_1$ to $h_4$ correspond to deterministic mechanisms which leak exactly 1 bit of the secret, while $h_5$ and $h_6$ leak nothing. In other words, with $50\%$ probability, 1 bit of the 2-bit secret is leaked.
Furthermore, $h_1$ and $h_2$ leak the secret $(0,1)$ exactly, and $h_3$ and $h_4$ leak $(1,0)$ exactly. 
Thus, the release of $\bmr$ destroys the privacy guarantee.

\smallskip
\noindent{\textbf{The Guarantee of LSH.}}
In general, for any number of hash functions and any length of input, an LSH which releases its choice of hyperplanes also leaks its choice of deterministic mechanism.
This means that it leaks the equivalence classes of the secrets. Such mechanisms belong to the `$k$-anonymity'-style of privacy mechanisms which promise privacy by hiding secrets in equivalence classes of size at least $k$. These have been shown to be unsafe due to their failure to compose well~\cite{Ganta:08:KDD,fernandes2018processing,Kawamoto:18:ISITA}.
This failure leads to the potential for linkage or intersection attacks by an adversary armed with auxiliary information. For this reason, we consider compositionality an essential property for a privacy-preserving system. LSH with hyperplane release does not provide such privacy guarantees.

\section{LSH-based Privacy Mechanisms}
\label{sec:LSHPM}
In this section, we propose two privacy protection mechanisms called \emph{LSHRR} and \emph{LapLSH}.
The former is an extension of RAPPOR~\cite{Erlingsson_CCS14} w.r.t. LSH,
and the latter is constructed using the Laplace mechanism and LSH.

\smallskip
\noindent{\textbf{Construction of LSHRR.}}
We introduce the \emph{LSH-then-RR privacy mechanism} (\emph{LSHRR}) as the randomized algorithm that (i) randomly chooses a $\kappa$-bit LSH function $H$, (ii) computes the $\kappa$-bit hash value $H(\bmx)$ of a given input $\bmx$, and 
(iii) applies the randomized response to each bit of $H(\bmx)$.

To formalize this, we define the \emph{$(\varepsilon, \kappa)$-bitwise RR} $\Qbrr$, which applies the randomized response $\Qrr$ to each bit of the input independently.
Formally, $\Qbrr: \calv\rightarrow\Dists\calv$ maps  a bitstring $\bmv = (v_1, v_2, \ldots, v_\kappa)$ to another $\bmy = (y_1, y_2, \ldots, y_\kappa)$ with probability
$\Qbrr(\bmv)[\bmy] {=} 
\prod_{i = 1}^{\kappa} \Qrr(v_i)[y_i]$.
Then 
LSHRR is defined as follows.

\begin{definition}[LSHRR]\label{def:LSHRR}\rm
The \emph{$\varepsilon$-LSH-then-RR privacy mechanism} (\emph{LSHRR})
instantiated with a $\kappa$-bit LSH function $H: \calx\rightarrow\calv$ is the randomized algorithm $Q_H: \calx\rightarrow\Dists\calv$ 
defined by $Q_H = \Qbrr \circ H$.
Given a distribution $D_\calh^\kappa$ of the $\kappa$-bit LSH functions,
the $\varepsilon$-LSHRR w.r.t. $D_\calh^\kappa$ is 
defined by $\Qlsh = \Qbrr \circ H^*$.
\end{definition}

LSHRR deals with two kinds of randomness:
(a) the randomness in choosing a (deterministic) LSH function $H$ from $D_\calh^\kappa$ (e.g., the random seed $\bmr$ in the random-projection-based hashing $\hproj$),
and
(b) the random noise added by the bitwise RR $\Qbrr$.
We can assume that each user of this privacy mechanism selects an input $\bmx$ independently of both kinds of randomness, since they wish to protect their own privacy when publishing~$\bmx$.

In practical settings, the same LSH function $H$ is often 
used to produce hash values of different inputs; namely, multiple hash values are dependent on an identical hash seed 
(e.g., a service provider
would generate
a hash seed so that multiple users can share the same $H$ to compare their hash values).
Furthermore, the adversary 
might 
obtain the LSH function $H$ (or the seed $\bmr$ used to produce $H$), and might learn a set of possible inputs that produce the same hash value $H(x)$ without knowing the actual input~$x$.
Therefore, the hash value $H(x)$ 
might 
reveal partial information on the input $x$ (see \Sec{sec:prv_prop}), and the bitwise RR $\Qbrr$ is crucial in guaranteeing privacy (see \Sec{sec:privacy-guarantees} for our privacy analyses).

On the other hand, $\Qbrr$ causes errors in the Hamming distance as follows:

\begin{restatable}[Error bound]{prop}{utilityQH}
\label{prop:utilityQH}
For any $x, x'\in\calx$, the expected error in the Hamming distance satisfies 
$\expect[ | \dV(Q_H(x), Q_H(x')) - \dV(H(x), \allowbreak H(x')) | ] \allowbreak \le
 \frac{2\kappa}{1 + e^{\varepsilon}}$
where the expectation is taken over the randomness in the bitwise RR.
\end{restatable}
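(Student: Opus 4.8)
The plan is to reduce the claimed bound to a per-coordinate Bernoulli computation, exploiting that the bitwise RR $\Qbrr$ acts independently on each of the $\kappa$ bits and that $H$ is fixed, so that $H(x)$ and $H(x')$ are deterministic binary vectors and the outer expectation is taken only over the RR noise. Write $\bmv = H(x)$ and $\bmv' = H(x')$, and let $\bmy = Q_H(x)$, $\bmy' = Q_H(x')$ denote the two RR outputs obtained by applying $\Qrr$ independently to each bit of $\bmv$ and of $\bmv'$. Since $\dV(\bmy,\bmy') = \sum_{i=1}^{\kappa}|y_i - y'_i|$ and likewise $\dV(\bmv,\bmv') = \sum_{i=1}^{\kappa}|v_i - v'_i|$, the quantity inside the outer expectation is $\bigl|\sum_{i=1}^{\kappa}(|y_i-y'_i| - |v_i-v'_i|)\bigr|$.

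First I would apply the triangle inequality to pull the absolute value inside the sum, and then use linearity of expectation, reducing the goal to bounding $\sum_{i=1}^{\kappa}\expect\bigl[\,\bigl|\,|y_i-y'_i| - |v_i-v'_i|\,\bigr|\,\bigr]$. Because each of $|y_i-y'_i|$ and $|v_i-v'_i|$ lies in $\{0,1\}$, every summand is a $\{0,1\}$-valued random variable, so its expectation is simply the probability that the agreement status of the $i$-th bit pair is altered by the noise.

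The crux is to identify this per-bit discrepancy explicitly. Model each application of $\Qrr$ as a bit flip: let $f_i, f'_i\in\{0,1\}$ be independent flip indicators for $v_i$ and $v'_i$, each equal to $1$ with probability $p = \frac{1}{e^{\varepsilon}+1}$. Over $\{0,1\}$ one has $|y_i-y'_i| = (v_i\oplus v'_i)\oplus(f_i\oplus f'_i)$, whence $\bigl|\,|y_i-y'_i| - |v_i-v'_i|\,\bigr| = \mathbf{1}\{f_i\oplus f'_i = 1\}$; that is, the $i$-th contribution changes precisely when exactly one of the two bits is flipped, regardless of the values of $v_i, v'_i$. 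Its expectation is therefore $\Pr[f_i\oplus f'_i = 1] = 2p(1-p)$.

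Finally I would bound $2p(1-p)\le 2p = \frac{2}{e^{\varepsilon}+1}$ and sum over the $\kappa$ coordinates to conclude $\frac{2\kappa}{1+e^{\varepsilon}}$, as stated. The only genuinely substantive step is the third one, recognizing that the per-bit change is governed solely by whether exactly one member of the pair flips; once that identity is in hand the remainder is a one-line calculation. Note that the outer triangle inequality, together with the estimate $1-p\le 1$, is what loosens the constant: the exact per-bit probability $2p(1-p) = \frac{2e^{\varepsilon}}{(e^{\varepsilon}+1)^2}$ would give the slightly sharper bound $\frac{2\kappa e^{\varepsilon}}{(e^{\varepsilon}+1)^2}$, but the coarser form suffices for the claim.
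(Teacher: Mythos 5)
Your proof is correct, and it takes a genuinely finer route than the paper's. The paper argues at the level of the Hamming metric on $\calv$: by the triangle inequality, the change in distance is bounded by the sum of the two endpoint displacements, $\dV(\Qbrr(H(x)), H(x)) + \dV(\Qbrr(H(x')), H(x'))$, and each displacement has expectation exactly $\frac{\kappa}{1 + e^{\varepsilon}}$, giving the stated bound. (As written, the paper's proof only displays the upper direction of the triangle inequality; the absolute value in the statement requires the symmetric application as well, which your coordinate-wise argument handles automatically.) You instead decompose per coordinate and identify the per-bit error \emph{exactly}: it is the indicator that precisely one of the two flips occurs, an event of probability $2p(1-p)$ with $p = \frac{1}{e^{\varepsilon}+1}$, independent of the hash bits themselves. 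What the paper's argument buys is brevity and metric-generality --- it uses nothing about the binary structure beyond the expected displacement of $\Qbrr$. What yours buys is sharpness: by tracking the joint flip behavior rather than union-bounding the two displacements (which ignores the cancellation when both bits flip), you show the expected error is really at most $2\kappa p(1-p) = \frac{2\kappa e^{\varepsilon}}{(e^{\varepsilon}+1)^2}$, strictly smaller than the claimed $\frac{2\kappa}{1+e^{\varepsilon}}$; your final relaxation $1-p \le 1$ is only needed to match the proposition as stated.
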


\smallskip
\noindent{\textbf{Construction of LapLSH.}}
We also propose the \emph{Laplace-then-LSH privacy mechanism} (LapLSH) as the randomized algorithm that 
(i) randomly chooses a $\kappa$-bit LSH function $H$, 
(ii) 
applies 
the multivariate Laplace mechanism $\Qlap$ to $\bmx$, 
and 
(iii) 
computes the $\kappa$-bit hash value $H(\Qlap(\bmx))$.

\begin{definition}[LapLSH]\label{def:LapLSH}\rm
The \emph{$(\varepsilon,\dX)$-Laplace-then-LSH privacy mechanism} (\emph{LapLSH})
with a $\kappa$-bit LSH function $H: \calx\rightarrow\calv$ is the randomized algorithm $\QLapH: \calx\rightarrow\Dists\calv$ 
defined by $\QLapH = H \circ \Qlap$.
The $(\varepsilon,\dX)$-LapLSH w.r.t. a distribution $D_\calh^\kappa$ of the $\kappa$-bit LSH functions is
defined by $\Qlaplsh = H^* \circ \Qlap$.
\end{definition}

LapLSH also deals with the two kinds of randomness discussed above, and the Laplace mechanism $\Qlap$ is crucial in guaranteeing privacy.
One of the main differences from LSHRR is that LapLSH adds noise directly to the input before applying LSH whereas LSHRR adds noise after applying LSH to the input.

In \Sec{sec:evaluate} we implement the multivariate Laplace mechanism with the input domain $\calx=\reals^n$ and Euclidean distance $\deuc$
described in 
\cite{Fernandes19:POST}; 
namely, we generate additive noise by constructing a unit vector uniformly at random over the $n$-dimensional unit sphere $\sphere^{n}$, scaled by a random value generated from the gamma distribution with shape $n$ and scale $\nicefrac{1}{\varepsilon}$.

\section{Privacy Analyses of the Mechanisms}
\label{sec:privacy-guarantees}
We provide an analysis of the privacy guarantees 
provided by our mechanisms in two operational scenarios:
(i) w.r.t. an already-chosen LSH function (e.g., where it has been generated by a service provider), and
(ii) w.r.t. all possible choices of the LSH function (e.g., prior to its instantiation by a particular service provider).
Note that our analysis is general in that it does not
rely on specific metrics or hashing algorithms for LSH.

\subsection{LSHRR's Privacy w.r.t. the Particular LSH Function}
\label{sub:privacy:worst}
We first show the privacy guarantee for LSHRR w.r.t. the particular LSH function used by the service provider.
This type of privacy is defined using the Hamming distance $\dV$ between the hash values of given inputs, and 
the degree of privacy depends on the actual selection of the LSH function $H$ (or the hash seeds $\bmr$),
which we assume is available to the adversary.
Since LSH preserves the original metric $\dX$ only approximately, we obtain \XDP{} guarantee w.r.t. a pseudo-metric $\deh$ that approximates $\dX$ as follows.

\begin{restatable}[\XDP{} of $Q_H$]{prop}{worstPrivacyQH}
\label{prop:worstPrivacyQH}
Let $H: \calx\rightarrow\calv$ be a $\kappa$-bit LSH function, and 
$\deh$ be the pseudometric over $\calx$
defined by 
$\deh(\bmx, \bmx') = \varepsilon\dV(H(\bmx), \allowbreak H(\bmx'))$ for $\bmx, \bmx' \in\calx$.
Then the $\varepsilon$-LSHRR $Q_H$ instantiated with $H$
provides $(\deh,0)$-\XDP{}.
\end{restatable}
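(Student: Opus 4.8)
The plan is to reduce the statement to a pointwise multiplicative bound on the output probabilities and then exploit the bitwise product structure of $\Qbrr$. Since $H$ is a fixed deterministic function, $Q_H(\bmx) = \Qbrr(H(\bmx))$, so writing $\bmv = H(\bmx)$ and $\bmv' = H(\bmx')$ it suffices to show that the bitwise randomized response satisfies, for every output $\bmy\in\calv$,
\[
\Qbrr(\bmv)[\bmy] \le e^{\varepsilon\, \dV(\bmv, \bmv')}\, \Qbrr(\bmv')[\bmy].
\]
Summing this inequality over $\bmy\in S$ for an arbitrary $S\subseteq\calv$, and recalling that $\dV(\bmv,\bmv') = \dV(H(\bmx),H(\bmx'))$ and $\deh(\bmx,\bmx') = \varepsilon\,\dV(H(\bmx),H(\bmx'))$, immediately yields $Q_H(\bmx)[S] \le e^{\deh(\bmx,\bmx')}\, Q_H(\bmx')[S]$, which is exactly $(\deh,0)$-\XDP{} (the additive term $\delta$ being $0$).

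First I would establish the pointwise bound by factorizing over the $\kappa$ coordinates. By the definition of $\Qbrr$,
\[
\frac{\Qbrr(\bmv)[\bmy]}{\Qbrr(\bmv')[\bmy]} = \prod_{i=1}^{\kappa} \frac{\Qrr(v_i)[y_i]}{\Qrr(v'_i)[y_i]}.
\]
I would then bound each factor by a case split on whether $v_i = v'_i$. When $v_i = v'_i$ the two one-bit output distributions coincide, so the factor equals $1 = e^{0}$. When $v_i \neq v'_i$, for any $y_i\in\{0,1\}$ exactly one of the numerator and denominator probabilities equals $\tfrac{e^{\varepsilon}}{e^{\varepsilon}+1}$ while the other equals $\tfrac{1}{e^{\varepsilon}+1}$, so the factor is either $e^{\varepsilon}$ or $e^{-\varepsilon}$, and in either case it is at most $e^{\varepsilon}$. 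Multiplying, the number of indices with $v_i\neq v'_i$ is precisely $\dV(\bmv,\bmv') = \sum_i |v_i - v'_i|$, hence the product is at most $e^{\varepsilon\,\dV(\bmv,\bmv')}$, giving the claimed pointwise bound.

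To close the argument I would record that $\deh$ is well-defined as a pseudometric: it is the pullback of the Hamming metric $\dV$ along $H$ scaled by $\varepsilon>0$, hence nonnegative, symmetric, subadditive, and vanishing on the diagonal (though possibly also on distinct inputs that collide under $H$, which is why it is only a pseudometric). There is no genuine obstacle in this proof, which is essentially a per-bit accounting; the one step doing the real work is the product factorization, since that is what converts the independent per-coordinate cost $\varepsilon$ into the Hamming distance $\dV(H(\bmx),H(\bmx'))$ sitting in the exponent. I would take care to prove the pointwise bound uniformly for all $\bmy\in\calv$ before summing, as the bound must hold for every output in order to be aggregated over an arbitrary set $S$.
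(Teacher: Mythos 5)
Your proposal is correct and takes essentially the same route as the paper: the paper isolates your pointwise bound as a separate lemma (Proposition~\ref{prop:XDPofRAPPOR}, the $(\varepsilon\dV,0)$-\XDP{} of $\Qbrr$), proved by the same per-coordinate factorization of $\nicefrac{\Qbrr(\bmv)[\bmy]}{\Qbrr(\bmv')[\bmy]}$, with the triangle inequality on the exponents playing the role of your per-bit case split, and then composes with the fixed deterministic $H$ exactly as you do to obtain $(\deh,0)$-\XDP{}.
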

However, we cannot compute $\deh$ or the degree of \XDP{} in Proposition~\ref{prop:worstPrivacyQH} until $H$ has been computed.
To overcome this unclear guarantee of privacy, in \Sec{sub:expected:privacy:DPLSH} we show a useful privacy guarantee that 
can be evaluated
without requiring 
$H$ (or hash seeds) generated by the service provider.

Note that the $\kappa\varepsilon$-\DP{} of LSHRR is obtained as the worst case of Proposition~\ref{prop:worstPrivacyQH}, i.e., when the hamming distance between vectors is maximum due to 
an ``unlucky'' choice of
hash seeds or 
very large distance
$\dX(\bmx, \bmx')$ between the inputs $\bmx, \bmx'$.
The following proposition guarantees the privacy independently of the actual choice of $H$.

\begin{restatable}[Worst-case privacy of $Q_H$]{prop}{worstDPQH}
\label{prop:worstDPQH}
For a $\kappa$-bit LSH function $H$,
the $\varepsilon$-LSHRR $Q_H$ instantiated with $H$ provides $\kappa \epsilon$-$\DP{}$.
\end{restatable}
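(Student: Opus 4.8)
The plan is to derive this worst-case pure \DP{} bound directly from the \XDP{} guarantee of Proposition~\ref{prop:worstPrivacyQH} by maximizing its privacy parameter $\deh$ over all input pairs. Recall that Proposition~\ref{prop:worstPrivacyQH} already establishes that the $\varepsilon$-LSHRR $Q_H$ instantiated with $H$ provides $(\deh,0)$-\XDP{}, where $\deh(\bmx,\bmx') = \varepsilon\,\dV(H(\bmx), H(\bmx'))$. Unfolding \Def{def:XDP-simple}, this means that for all $\bmx, \bmx'\in\calx$ and all $S\subseteq\calv$,
\begin{align*}
Q_H(\bmx)[S] \;\le\; e^{\varepsilon\,\dV(H(\bmx),\, H(\bmx'))}\, Q_H(\bmx')[S]
{.}
\end{align*}
Hence it suffices to bound the exponent uniformly over all input pairs, i.e., to take the worst case of $\deh$.

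The key (and essentially only) observation is that the Hamming distance between two $\kappa$-bit strings is at most $\kappa$. Writing $\bmv = H(\bmx)$ and $\bmv' = H(\bmx')$, both lie in $\calv = \{0,1\}^\kappa$, so each summand $|v_i - v'_i|$ of $\dV(\bmv,\bmv') = \sum_{i=1}^{\kappa}|v_i - v'_i|$ is at most $1$, giving $\dV(\bmv,\bmv')\le\kappa$. Substituting $\varepsilon\,\dV(H(\bmx),H(\bmx'))\le\kappa\varepsilon$ into the displayed inequality yields $Q_H(\bmx)[S]\le e^{\kappa\varepsilon}\,Q_H(\bmx')[S]$ for all $\bmx, \bmx'\in\calx$ and $S\subseteq\calv$, which is exactly the statement that $Q_H$ provides $\kappa\varepsilon$-\DP{} (the instance of \Def{def:XDP-simple} in which $\xi$ is the constant $\kappa\varepsilon$ and $\delta = 0$).

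There is no genuine obstacle here: the claim is precisely the worst case of Proposition~\ref{prop:worstPrivacyQH}, attained when $H(\bmx)$ and $H(\bmx')$ disagree in all $\kappa$ bits. The only point worth stressing is that the bound $\dV\le\kappa$ holds for \emph{every} input pair irrespective of the realized hash function $H$ and its hash seeds $\bmr$, since it depends only on $\calv=\{0,1\}^\kappa$. This uniformity is what makes the resulting guarantee independent of the actual choice of $H$, in contrast to the input- and seed-specific parameter $\deh$ of Proposition~\ref{prop:worstPrivacyQH}, which cannot be evaluated until $H$ is computed.
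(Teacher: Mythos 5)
Your proof is correct and follows exactly the paper's route: the paper likewise derives the claim from Proposition~\ref{prop:worstPrivacyQH} by observing that the Hamming distance between $\kappa$-bit hash values is at most $\kappa$, so that $\deh(\bmx,\bmx') \le \kappa\varepsilon$ uniformly over all input pairs. Your additional remark that this bound holds irrespective of the realized hash function $H$ is a nice clarification but does not change the argument.
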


\subsection{LSHRR's Privacy w.r.t. the Distribution of LSH Functions}
\label{sub:expected:privacy:DPLSH}

Next, we show LSHRR's privacy guarantee w.r.t. any possible LSH function that may be generated.
This type of privacy guarantee is useful in a variety of scenarios. 
For example, 
a privacy analyst could
evaluate the expected degree of privacy before the service provider fixes the LSH function or hash seeds.
For another example, the seeds 
may be 
stored in tamper-resistant hardware privately.

The privacy guarantee without relying on specific LSH functions or hash seeds is modeled as a probability distribution of degrees of \XDP{} over the random choice of seeds.
Then this can be characterized as an extension of concentrated \DP{}~\cite{Dwork:16:arXiv} and probabilistic \DP{}~\cite{Machanavajjhala:08:ICDE} with input distance,
yielding
the
\XDP{} guarantee.

In the privacy analysis, we deal with the situation where multiple users produce hash values by employing the same hash seeds,
as seen in typical applications such as approximate NNS.
Then we define privacy notions for the mechanisms that share randomness among them.

Formally, we denote 
by $\alg_r: \calx \rightarrow \Dists\caly$ a randomized algorithm $\alg$ with a shared input $r\in\calr$.
Given a distribution $\lambda$ over a finite set $\calr$ of shared input, we denote by $\alg_\lambda: \calx \rightarrow \Dists\caly$ the randomized algorithm that draws a shared input $r$ from $\lambda$ and behaves as $\alg_r$;
i.e., 
$\alg_\lambda(x)[y] = \sum_{r\in\calr} \lambda[r] \alg_r(x)[y]$.
Then we extend the notion of privacy loss~\cite{Dwork:16:arXiv} with shared randomness
as follows.

\begin{definition}[Privacy loss]\label{def:privacy-loss-shared}\rm
For a randomized algorithm $\alg_r: \calx \rightarrow \Dists\caly$ with a shared input $r$,
the \emph{privacy loss} on $y\in\caly$ w.r.t. $x, x'\in\calx$, $r\in\calr$ is defined by:
\begin{align*}
\Loss{x,x',y,r} =
\ln \bigl( {\textstyle\frac{ \alg_r(x)[y] }{ \alg_r(x')[y] }} \bigr)
{,}
\end{align*}
where the probability is taken over the random choices in $\alg_r$.
Given a distribution $\lambda$ over $\calr$,
the \emph{privacy loss random variable} $\Loss{x,x'}$ of $x$ over $x'$ w.r.t. $\lambda$ is the real-valued random variable representing the privacy loss $\Loss{x,x',y,r}$ where a \emph{shared randomness} $r$ is sampled from $\lambda$ and $y$ is sampled from $\alg_r(x)$.
\end{definition}

To characterize the privacy loss random variable  $\Loss{x,x'}$ for 
LSHRR, we introduce an extension of \CDP{}~\cite{Dwork:16:arXiv} wth input distance $d(x, x')$ as follows.
\begin{definition}[\CXDP{}]\label{def:CXDP}\rm
Let $\mu \in\realsnng$, $\tau \in\realspos$, $\lambda\in\Dists\calr$, and $d: \calx\times\calx \rightarrow \realsnng$ be a metric.
A random variable $Z$ over $\reals$ is \emph{$\tau$-subgaussian} if for all $s\in\reals$, $\expect[\exp(s Z)] \leq \exp(\frac{s^2\tau^2}{2})$.
A randomized algorithm $\alg_\lambda: \calx \rightarrow \Dists\caly$ provides \emph{$(\mu, \tau, d)$-mean-concentrated extended differential privacy (\CXDP{})} 
if for all $x, x' \in\calx$, the privacy loss random variable $\Loss{x,x'}$ of $x$ over $x'$ w.r.t. $\lambda$ satisfies that $\expect[\Loss{x,x'}] \le \mu d(x, x')$, and that $\Loss{x,x'} - \expect[\Loss{x,x'}]$ is $\tau$-subgaussian.
\end{definition}

Then we obtain the following \CXDP{} guarantee for 
LSHRR.
\begin{restatable}[\CXDP{} of $\Qlsh$]{prop}{CXDPofMechanism}
\label{prop:CXDPofMechanism}
The $\varepsilon$-LSHRR provides $(\varepsilon \kappa,\frac{\varepsilon \kappa}{2},\dX)$-\CXDP{}.
\end{restatable}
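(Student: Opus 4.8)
The plan is to exploit the fully bitwise structure of $\Qlsh = \Qbrr\circ H^*$. Since the $\kappa$-bit LSH $H^*$ draws $\kappa$ independent one-bit hashes and $\Qbrr$ applies an independent copy of $\Qrr$ to each coordinate, I would take the shared randomness $r$ in \Def{def:privacy-loss-shared} to be the sampled LSH function $H\sim\distH^\kappa$, so that $\alg_r = Q_H = \Qbrr\circ H$, and observe that the privacy loss factorizes as
\[
\Loss{x,x',y,H} = \sum_{i=1}^{\kappa} \ln\frac{\Qrr(h_i(x))[y_i]}{\Qrr(h_i(x'))[y_i]}.
\]
Sampling $H\sim\distH^\kappa$ and then $y\sim Q_H(x)$ turns $\Loss{x,x'}$ into a sum $\sum_{i=1}^\kappa Y_i$ of \emph{independent} one-bit contributions, and the whole argument reduces to understanding a single $Y_i$.

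Next I would pin down the law of each $Y_i$. By \Def{def:RR}, the $i$-th log-ratio is $0$ when $h_i(x)=h_i(x')$, and is $+\varepsilon$ (if $y_i=h_i(x)$) or $-\varepsilon$ (if $y_i=h_i(x')$) when $h_i(x)\ne h_i(x')$. The defining LSH property (\Def{def:LSH}) gives $\Pr_{h_i\sim\distH}[h_i(x)\ne h_i(x')]=\dX(x,x')$, while $y\sim Q_H(x)$ assigns, conditioned on the bits differing, probability $\tfrac{e^{\varepsilon}}{e^{\varepsilon}+1}$ to $y_i=h_i(x)$ and $\tfrac{1}{e^{\varepsilon}+1}$ to $y_i=h_i(x')$. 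Hence $Y_i$ is supported on $\{-\varepsilon,0,+\varepsilon\}$ with $\Pr[Y_i=0]=1-\dX(x,x')$, so that $\expect[Y_i]=\varepsilon\,\dX(x,x')\cdot\frac{e^{\varepsilon}-1}{e^{\varepsilon}+1}$.

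The mean condition of \CXDP{} then follows immediately by linearity: $\expect[\Loss{x,x'}]=\kappa\varepsilon\,\dX(x,x')\cdot\frac{e^{\varepsilon}-1}{e^{\varepsilon}+1}\le\kappa\varepsilon\,\dX(x,x')$, using $\frac{e^{\varepsilon}-1}{e^{\varepsilon}+1}\le 1$; this is exactly $\mu\,d(x,x')$ with $\mu=\varepsilon\kappa$ and $d=\dX$ (a metric, as noted in \Sec{sub:LSH}). For the subgaussian condition I would center and use independence: each $Y_i-\expect[Y_i]$ lies in an interval of width $2\varepsilon$, so Hoeffding's lemma makes it $\varepsilon$-subgaussian, and the factorization $\expect[e^{s(\Loss{x,x'}-\expect[\Loss{x,x'}])}]=\prod_i\expect[e^{s(Y_i-\expect[Y_i])}]$ together with the sum rule for independent subgaussians shows $\Loss{x,x'}-\expect[\Loss{x,x'}]$ is $\sqrt{\kappa}\,\varepsilon$-subgaussian, hence also $\tfrac{\varepsilon\kappa}{2}$-subgaussian in the regime where $\sqrt{\kappa}\le\kappa/2$.

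I expect the subgaussian constant to be the delicate step. The Hoeffding-plus-independence route naturally yields variance proxy $\kappa\varepsilon^2$, i.e.\ parameter $\sqrt{\kappa}\,\varepsilon$, and matching the stated $\tfrac{\varepsilon\kappa}{2}$ either relies on $\sqrt\kappa\le\kappa/2$ (which holds in all the paper's instantiations, where $\kappa$ is large) or on a sharper per-bit bound on $\expect[e^{s(Y_i-\expect[Y_i])}]$ that exploits the mass at $0$ (present whenever $\dX(x,x')<1$) and the asymmetry of the $\pm\varepsilon$ split; note that for small $\kappa$ and small $\varepsilon$ with $\dX(x,x')=1$ the centred loss behaves like a scaled Rademacher sum whose tight parameter is $\sqrt\kappa\,\varepsilon$, so the constant cannot be improved to $\tfrac{\varepsilon\kappa}{2}$ uniformly. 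A secondary point requiring care is the bookkeeping of the two randomness sources --- the seeds inside $H$ and the RR noise --- which must be kept independent across coordinates and correctly coupled within each coordinate (the sign of $Y_i$ is conditioned on the very event $h_i(x)\ne h_i(x')$) so that the product form of the moment-generating function is justified.
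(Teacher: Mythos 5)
Your proposal is correct in all its computations, but it takes a genuinely different route from the paper, and the difference is instructive. The paper never works with the exact per-bit losses: it bounds the privacy loss \emph{pointwise} by $\varepsilon\dV(H(\bmx),H(\bmx'))$ via Proposition~\ref{prop:XDPofRAPPOR}, observes via Lemma~\ref{lem:Distribution:HammingDistanceOfLSH} that $Z=\dV(H(\bmx),H(\bmx'))$ is binomial with mean $\kappa\dX(\bmx,\bmx')$, and then applies Hoeffding's lemma to the surrogate $\varepsilon Z$, whose support $[0,\varepsilon\kappa]$ has width $\varepsilon\kappa$; this yields the parameter $\frac{\varepsilon\kappa}{2}$ uniformly in $\kappa$, with the RR randomness entirely absorbed into the pointwise bound. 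You instead analyze the true loss $\sum_i Y_i$ with $Y_i\in\{-\varepsilon,0,+\varepsilon\}$, where each coordinate has support of width $2\varepsilon$, so per-bit Hoeffding plus independence gives $\sqrt{\kappa}\,\varepsilon$. For $\kappa\ge 4$ your conclusion is \emph{strictly stronger} than the proposition (a $\sqrt{\kappa}$-scale subgaussian constant instead of linear in $\kappa$, and the sharper mean bound with the factor $\frac{e^{\varepsilon}-1}{e^{\varepsilon}+1}$), and the proposition follows by monotonicity of the subgaussian parameter.

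The wrinkle you flag at $\kappa\le 3$ is real, and it pinpoints exactly where the two arguments diverge. Your Rademacher observation is correct: for $\dX(\bmx,\bmx')=1$ (attainable, e.g., by antipodal vectors under $d_\theta$) and small $\varepsilon$, the centered true loss has variance $\kappa\varepsilon^2\bigl(1-\tanh^2(\nicefrac{\varepsilon}{2})\bigr)$, which exceeds $\bigl(\frac{\varepsilon\kappa}{2}\bigr)^2$ when $\kappa\le 3$, so the true loss is \emph{not} $\frac{\varepsilon\kappa}{2}$-subgaussian there. The paper's proof is immune to this only because it establishes subgaussianity of the one-sided surrogate $\varepsilon Z$ rather than of $\Loss{\bmx,\bmx'}$ itself: the pointwise domination $\Loss{\bmx,\bmx'}\le\varepsilon Z$ transfers the mean bound and all upper-tail bounds, but not a two-sided moment-generating-function bound, which is what \Def{def:CXDP} literally demands. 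So read strictly against the paper's own definitions, your analysis is the faithful one and shows the stated constant holds only for $\kappa\ge4$; the paper's reading (loss measured through the proxy $\varepsilon Z$) covers all $\kappa$. Since the downstream chain (Lemma~\ref{lem:CXDPimpliesPXDP}, Lemma~\ref{lem:PXDPimpliesXDP}, Theorem~\ref{thm:generalXDPofMechanism}) uses only the mean and the upper tail, both routes support all of the paper's \PXDP{}/\XDP{} guarantees, and your restriction $\kappa\ge4$ is harmless in every setting the paper considers ($\kappa=10,20,50$). If you want the proposition verbatim for every $\kappa$, you must do what the paper does: take $\varepsilon\dV(H(\bmx),H(\bmx'))$ as the quantity whose concentration is asserted.
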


To clarify the implication of \CXDP{}, we introduce an extension of probabilistic \DP{}~\cite{Machanavajjhala:08:ICDE} with input distance, which we call \PXDP{}.
Intuitively, $(\xi, \delta)$-\PXDP{} guarantees $(\xi,0)$-\XDP{} with probability $1-\delta$.

\begin{definition}[\PXDP{}]\label{def:PXDP:general}\rm
Let $\lambda\in\Dists\calr$, $\xi: \calx\times\calx \rightarrow \realsnng$, and $\delta: \calx\times\calx \rightarrow [0, 1]$.
A randomized algorithm $\alg_\lambda: \calx \rightarrow \Dists\caly$ provides \emph{$(\xi, \delta)$-probabilistic extended differential privacy (\PXDP{})} if for all $x, x' \in\calx$, 
$\Prob[\, \Loss{x,x'} > \xi(x, x') \,] \le \delta(x, x')$.
We abuse notation to write $\delta$ when $\delta(x, x')$ is constant.
\end{definition}

\conference{In Appendix~\ref{appendix:properties},} 
\arxiv{In Appendix~\ref{sec:proofs},}
we show that \CXDP{} implies \PXDP{} and that \PXDP{} implies \XDP{}.
Based on these, we show that 
LSHRR provides \PXDP{} and \XDP{} as follows.

\begin{restatable}[\PXDP{}/\XDP{} of $\Qlsh$]{thm}{generalXDPofMechanism}
\label{thm:generalXDPofMechanism}
Let $\delta\in\realspos$, $\varepsilon' = \varepsilon \sqrt{\frac{-\ln\delta}{2}}$, and
$\xi(\bmx, \bmx') = \varepsilon \kappa \dX(\bmx, \bmx') + \varepsilon' \sqrt{\kappa}$.
The $\varepsilon$-LSHRR provides $(\xi, \delta)$-\PXDP{},
hence $(\xi, \delta)$-\XDP{}.
\end{restatable}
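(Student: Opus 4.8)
The plan is to obtain this statement as a corollary of the concentrated-privacy analysis in \Prop{prop:CXDPofMechanism} together with the generic implications $\CXDP \Rightarrow \PXDP \Rightarrow \XDP$ announced for the appendix, so that essentially no new probabilistic work on the mechanism itself is required here. First I would invoke \Prop{prop:CXDPofMechanism} to record that the $\varepsilon$-LSHRR is $(\varepsilon\kappa, \tau, \dX)$-\CXDP{} for the subgaussian parameter $\tau$ supplied there. This already packages the two facts I need about the privacy loss random variable $\Loss{\bmx,\bmx'}$ of \Def{def:privacy-loss-shared}, namely the mean bound $\expect[\Loss{\bmx,\bmx'}] \le \varepsilon\kappa\,\dX(\bmx,\bmx')$ and the $\tau$-subgaussianity of the centered loss $\Loss{\bmx,\bmx'} - \expect[\Loss{\bmx,\bmx'}]$.

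The core computation is the $\CXDP \Rightarrow \PXDP$ step, which I would carry out explicitly in order to pin down the stated $\xi$. From $\tau$-subgaussianity, the standard one-sided tail bound gives $\Prob[\,\Loss{\bmx,\bmx'} - \expect[\Loss{\bmx,\bmx'}] > t\,] \le \exp(-t^2 / 2\tau^2)$ for every $t > 0$. Setting the right-hand side equal to $\delta$ forces the threshold $t = \tau\sqrt{-2\ln\delta}$, and adding back the mean bound yields $\Prob[\,\Loss{\bmx,\bmx'} > \varepsilon\kappa\,\dX(\bmx,\bmx') + \tau\sqrt{-2\ln\delta}\,] \le \delta$. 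This is exactly the $(\xi,\delta)$-\PXDP{} condition of \Def{def:PXDP:general}, once I verify that the additive term collapses to $\varepsilon'\sqrt{\kappa}$: substituting the subgaussian parameter from \Prop{prop:CXDPofMechanism} and $\varepsilon' = \varepsilon\sqrt{-\ln\delta/2}$ and simplifying the $\sqrt{2}$ factors should give $\tau\sqrt{-2\ln\delta} = \varepsilon'\sqrt{\kappa}$, so that $\xi(\bmx,\bmx') = \varepsilon\kappa\,\dX(\bmx,\bmx') + \varepsilon'\sqrt{\kappa}$ as claimed.

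Finally I would invoke the appendix implication $\PXDP \Rightarrow \XDP$ to upgrade the probabilistic guarantee to the $(\xi,\delta)$-\XDP{} conclusion of \Def{def:XDP-simple}: intuitively, on the event (of probability at least $1-\delta$) that the privacy loss stays below $\xi$ the mechanism behaves as a pure $(\xi,0)$-\XDP{} mechanism, and the remaining mass of at most $\delta$ is absorbed into the additive slack of extended DP.

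The one delicate point is the bookkeeping of constants. I must make sure that the subgaussian parameter recorded in \Prop{prop:CXDPofMechanism} combines with $t = \tau\sqrt{-2\ln\delta}$ to reproduce precisely the $\varepsilon'\sqrt{\kappa}$ term rather than an $\varepsilon'\kappa$ term, since this $\sqrt{\kappa}$ scaling is exactly what buys the improvement over the worst-case $\kappa\varepsilon$-\DP{} bound of \Prop{prop:worstDPQH}. All of the genuinely mechanism-specific work — the Hoeffding-type subgaussian estimate for the bitwise-RR privacy loss summed over the $\kappa$ independent hashes — has already been discharged inside \Prop{prop:CXDPofMechanism}, so the argument here reduces to this substitution plus the two cited implications.
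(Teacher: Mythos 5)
There is a genuine gap, and it is exactly at the point you flag as ``the one delicate point'': the constants do not work out. \Prop{prop:CXDPofMechanism} supplies the subgaussian parameter $\tau = \frac{\varepsilon\kappa}{2}$ (its proof applies Hoeffding's lemma to the \emph{whole} sum $\varepsilon Z$, $Z = \dV(H(\bmx),H(\bmx'))$, using only that $\varepsilon Z$ lies in an interval of length $\varepsilon\kappa$). Plugging this into your tail bound gives the threshold
$t = \tau\sqrt{-2\ln\delta} = \tfrac{\varepsilon\kappa}{2}\sqrt{-2\ln\delta} = \varepsilon\kappa\sqrt{\tfrac{-\ln\delta}{2}} = \varepsilon'\kappa$,
not $\varepsilon'\sqrt{\kappa}$. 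So the $\CXDP{}\Rightarrow\PXDP{}$ route, with the proposition as stated, only proves $(\varepsilon\kappa\,\dX + \varepsilon'\kappa,\ \delta)$-\PXDP{}, which is weaker than the theorem by a factor of $\sqrt{\kappa}$ in the additive term --- and that $\sqrt{\kappa}$ scaling is, as you yourself note, the whole point of the theorem relative to the worst-case $\kappa\varepsilon$-\DP{} bound.

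The paper's proof does not chain through \CXDP{} at all, precisely for this reason. It works directly with the binomial structure: by the lemma on the distribution of the Hamming distance, $Z \sim \Bino(\kappa, \dX(\bmx,\bmx'))$, i.e., a sum of $\kappa$ \emph{independent} Bernoulli variables, and the Chernoff--Hoeffding theorem then gives
$\Pr[\, Z \ge \kappa(\dX(\bmx,\bmx') + \alpha) \,] \le \exp(-2\kappa\alpha^2) = \delta$ for $\alpha = \sqrt{\tfrac{-\ln\delta}{2\kappa}}$,
and $\varepsilon\kappa\alpha = \varepsilon'\sqrt{\kappa}$ as required; combining with the bound $\Loss{\bmx,\bmx'} \le \varepsilon Z$ (the \XDP{} of the bitwise RR w.r.t. $\dV$) and then $\PXDP{}\Rightarrow\XDP{}$ finishes the argument. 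The independence across the $\kappa$ bits is what buys the $\sqrt{\kappa}$; the coarse Hoeffding-on-the-sum bound inside \Prop{prop:CXDPofMechanism} discards it. Your plan could be salvaged by first \emph{strengthening} that proposition to $\tau = \frac{\varepsilon\sqrt{\kappa}}{2}$ --- apply Hoeffding's lemma to each bit separately (each contribution is $\tfrac{\varepsilon}{2}$-subgaussian) and use that subgaussian parameters of independent summands add in quadrature --- after which your substitution does yield $\tau\sqrt{-2\ln\delta} = \varepsilon'\sqrt{\kappa}$. But that is new probabilistic work on the mechanism, contrary to the premise that \Prop{prop:CXDPofMechanism} has already discharged it.
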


For our experimental evaluation, we show a privacy guarantee that gives tighter bounds but requires the parameters dependent on the inputs $\bmx$ and $\bmx'$.

\begin{restatable}[Tighter bound for \PXDP{}/\XDP{}]{prop}{tighterXDPofMechanism}
\label{prop:tighterXDPofMechanism}
For $a, b \in \realspos$, 
let $\DKL(a \| b) = a \ln {\textstyle\frac{a}{b}} + (1-a) \ln {\textstyle\frac{1-a}{1-b}}$.
For an $\alpha\in\realspos$,
we define:
\begin{align*}
\xi_\alpha(\bmx, \bmx') &= \varepsilon \kappa (\dX(\bmx, \bmx') + \alpha) \\
\delta_\alpha(\bmx, \bmx') &= \exp\bigl( - \kappa \DKL(\dX(\bmx, \bmx')+\alpha \| \dX(\bmx, \bmx')) \bigr)
{.}
\end{align*}
Then the $\varepsilon$-LSHRR provides $(\xi_\alpha, \delta_\alpha)$-\PXDP{},
hence $(\xi_\alpha, \delta_\alpha)$-\XDP{}.
\end{restatable}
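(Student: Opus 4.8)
The plan is to bound the tail probability appearing in the definition of \PXDP{} by a binomial tail, and then invoke the Chernoff--Hoeffding inequality, whose exponent is exactly the Bernoulli KL divergence $\DKL$. Throughout write $p = \dX(\bmx, \bmx')$. For the mechanism $\Qlsh$ the shared randomness is the $\kappa$-bit LSH function $H$ drawn from $\lambda = \distH^\kappa$, and for a fixed $H$ the mechanism reduces to $Q_H$; so by \Def{def:privacy-loss-shared} the privacy loss random variable $\Loss{x,x'}$ is obtained by first sampling $H \sim \lambda$ and then $y \sim Q_H(\bmx)$.

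First I would record a pointwise bound on the privacy loss coming from \Prop{prop:worstPrivacyQH}: for each fixed $H$, $Q_H$ is $(\deh, 0)$-\XDP{} with $\deh(\bmx, \bmx') = \varepsilon \dV(H(\bmx), H(\bmx'))$, and since the additive term is $0$ this is equivalent to saying that, writing $D_H = \dV(H(\bmx), H(\bmx'))$, the loss $\Loss{x,x',y,H} = \ln(Q_H(\bmx)[y] / Q_H(\bmx')[y])$ satisfies $\Loss{x,x',y,H} \le \varepsilon D_H$ for every $y$ in the support of $Q_H(\bmx)$. Conditioning on $H$, the event $\{\Loss{x,x'} > \xi_\alpha(\bmx, \bmx')\}$ is therefore impossible whenever $\varepsilon D_H \le \xi_\alpha(\bmx, \bmx')$, so averaging over $H$ yields
\[
\Prob[\Loss{x,x'} > \xi_\alpha(\bmx, \bmx')]
\;\le\; \Prob_{H}[\, \varepsilon D_H > \xi_\alpha(\bmx, \bmx') \,]
\;=\; \Prob_{H}[\, D_H > \kappa(p + \alpha) \,],
\]
using $\xi_\alpha(\bmx, \bmx') = \varepsilon\kappa(p+\alpha)$. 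Next I would identify the law of $D_H$: by \Def{def:LSH} each of the $\kappa$ independent one-bit hashes disagrees on $\bmx, \bmx'$ with probability exactly $p$, so $D_H \sim \Bino(\kappa, p)$. The upper-tail Chernoff--Hoeffding bound for the binomial then gives $\Prob[D_H \ge \kappa(p+\alpha)] \le \exp(-\kappa \DKL(p+\alpha \| p)) = \delta_\alpha(\bmx, \bmx')$, which establishes $(\xi_\alpha, \delta_\alpha)$-\PXDP{}. The concluding \XDP{} claim is then immediate from the general implication $\PXDP{} \Rightarrow \XDP{}$ proved earlier in the paper.

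I do not anticipate a genuine obstacle: once the problem is reduced to a binomial tail, it is a textbook Chernoff estimate. The points requiring care are bookkeeping rather than difficulty, namely justifying that one may replace the $y$-dependent privacy loss by its supremum $\varepsilon D_H$ before taking the expectation over $H$ (so the finer $y$-distribution used in the subgaussian argument of \Prop{prop:CXDPofMechanism} is not needed here), and checking that the KL-exponent produced by Chernoff--Hoeffding coincides exactly with the $\DKL$ appearing in $\delta_\alpha$, with the inequality in the correct direction since $\alpha>0$ pushes above the mean $p$. I would also note the implicit domain restriction $\dX(\bmx, \bmx') + \alpha \le 1$ under which $\DKL(p+\alpha \| p)$, and hence $\delta_\alpha$, is well-defined; this is the meaningful regime in which $\alpha$ is taken.
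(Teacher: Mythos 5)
Your proof is correct and follows essentially the same route as the paper's: reduce the tail of the privacy loss to the upper tail of the binomial Hamming distance $\dV(H(\bmx),H(\bmx'))\sim\Bino(\kappa,\dX(\bmx,\bmx'))$, apply the Chernoff--Hoeffding bound whose exponent is exactly $\DKL(\dX(\bmx,\bmx')+\alpha\,\|\,\dX(\bmx,\bmx'))$, and conclude via $\PXDP{}\Rightarrow\XDP{}$. In fact you make explicit a step the paper leaves implicit (that the pointwise bound $\Loss{x,x',y,H}\le\varepsilon\dV(H(\bmx),H(\bmx'))$ from \Prop{prop:worstPrivacyQH} licenses replacing the privacy loss by $\varepsilon Z$ before taking the tail probability), which is a sound addition rather than a deviation.
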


\subsection{Privacy Guarantee for LapLSH}
\label{sub:privacy-laplsh}

Finally, we also show that LapLSH provides \XDP{}.
This is immediate from the fact that \XDP{} is preserved under the post-processing by an LSH function.

\begin{restatable}[\XDP{} of $\QLapH$ and $\Qlaplsh$]{prop}{generalXDPofLapLSH}
\label{thm:generalXDPofLapLSH}
The $(\varepsilon,\dX)$-LapLSH $\QLapH$ with a $\kappa$-bit LSH function $H$ provides $(\varepsilon \dX,0)$-\XDP{}.
The $(\varepsilon,\dX)$-LapLSH $\Qlaplsh$ w.r.t. a distribution $D_\calh^\kappa$ of the $\kappa$-bit LSH functions also provides $(\varepsilon \dX,0)$-\XDP{}.
\end{restatable}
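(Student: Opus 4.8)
The plan is to prove \Prop{thm:generalXDPofLapLSH} by reducing everything to the post-processing invariance of extended differential privacy. The key observation, already signalled in the statement's surrounding text, is that $\QLapH = H \circ \Qlap$ applies a \emph{data-independent} transformation $H$ after the privacy-providing mechanism $\Qlap$. Since the text preceding \Def{def:Laplace-mech} records that the $(\varepsilon,\dX)$-Laplace mechanism $\Qlap$ already provides $(\varepsilon\dX, 0)$-\XDP{}, the entire content of the proposition is the claim that composing with $H$ (or with the randomized $H^*$) cannot degrade this guarantee.

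\textbf{Step 1: establish post-processing for \XDP{}.} First I would state and verify the post-processing lemma in the generalized form of \Def{def:XDP-simple}: if $\alg\colon\calx\to\Dists\caly$ provides $(\xi,\delta)$-\XDP{} and $g\colon\caly\to\Dists\calz$ is any randomized map independent of the input, then $g\circ\alg$ provides $(\xi,\delta)$-\XDP{}. The argument is routine: for any $T\subseteq\calz$, writing $(g\circ\alg)(x)[T] = \sum_{y} \alg(x)[y]\, g(y)[T]$, one bounds this using $\alg(x)[y] \le e^{\xi(x,x')}\alg(x')[y] + \delta(x,x')$ pointwise and resums, using $\sum_y g(y)[T] \le 1$ and that $g(y)[T]\in[0,1]$. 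The only mild subtlety is handling the additive $\delta$ term when summing over a possibly infinite output space, but since $g(y)[T]\le 1$ the $\delta$ contribution stays bounded by $\delta(x,x')$.

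\textbf{Step 2: instantiate for both claims.} For the first claim, take $\alg = \Qlap$, which provides $(\varepsilon\dX, 0)$-\XDP{}, and $g = H$, a fixed deterministic $\kappa$-bit LSH function viewed as a (degenerate) randomized map $\calx\to\Dists\calv$ that is clearly input-independent. By Step~1, $\QLapH = H\circ\Qlap$ provides $(\varepsilon\dX, 0)$-\XDP{}. For the second claim, take $g = H^*$, which by \Def{def:LSH} draws $H$ from $\distH^\kappa$ and outputs $H(\Qlap(\bmx))$; this is again an input-independent (now genuinely randomized) post-processing, so $\Qlaplsh = H^*\circ\Qlap$ provides $(\varepsilon\dX,0)$-\XDP{} by the same lemma. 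The crucial point I would emphasize is that the randomness in choosing $H$ is sampled independently of $\bmx$, so $H^*$ qualifies as legitimate post-processing; no property of LSH (such as its metric-approximation behavior) is needed here, which is exactly why this proof is short compared to the LSHRR analysis.

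The main obstacle is essentially presentational rather than mathematical: I must be careful that the generalized \XDP{} of \Def{def:XDP-simple}, with its function-valued $\xi$ and $\delta$, still admits post-processing, since the classical statement is usually phrased for constant $\varepsilon,\delta$. I expect this to go through verbatim because $\xi(x,x')$ and $\delta(x,x')$ depend only on the input pair $(x,x')$ and not on the output, so they factor out of the resummation cleanly. The one place to state explicitly is that $\Qlap$ providing $(\varepsilon\dX,0)$-\XDP{} is an already-established fact (from the remark after \Def{def:Laplace-mech}), so the proposition is genuinely immediate once the post-processing lemma is in hand.
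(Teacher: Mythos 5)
Your proposal is correct and takes exactly the paper's route: the paper's entire proof is the one-line remark that applying an LSH function (fixed, or drawn independently of the input) is post-processing, so the claim follows from the $(\varepsilon\dX,0)$-\XDP{} of the Laplace mechanism. One caveat on your Step~1: the justification you give for the additive term when $\delta>0$ is wrong, since $\sum_y g(y)[T] \le 1$ is false in general (each $g(y)[T]\le 1$, but the sum over $y$ can be as large as $|\caly|$), so the pointwise bound does not resum to a single $\delta(x,x')$. The standard fixes are either the layer-cake identity $\sum_y \alg(x)[y]\, g(y)[T] = \int_0^1 \alg(x)\bigl[\{y : g(y)[T] > t\}\bigr]\, dt$ followed by the set-level \XDP{} bound at each level $t$, or, for $g = H^*$, decomposing the randomized post-processing into a mixture of deterministic maps $H$ (exactly as the paper's Lemma on \PXDP{} $\Rightarrow$ \XDP{} handles shared randomness) and applying the set-level bound to each preimage $H^{-1}(T)$. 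Since the proposition only needs $\delta = 0$, where your pointwise resummation is valid, this flaw does not affect your proof of the statement itself.
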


\section{Experimental Evaluation}
\label{sec:evaluate}
We show an experimental evaluation of LSHRR and LapLSH
on 
two real 
datasets: MovieLens \cite{MovieLens25} and FourSquare \cite{Yang_WWW19}. 
Our goal is to determine the utility of these mechanisms when compared with a (slow but accurate) true nearest neighbor search. 
As a baseline, we 
also
show 
the performance of
vanilla (non-private) LSH.

\subsection{Datasets and Experimental Setup}
\label{exp:setup}\rm

Our problem of interest is \emph{privacy-preserving friend matching (or friend recommendation)}. 
In this scenario, we are given a dataset 
of users in which each user is represented as a (real-valued) vector of attributes. 
The data curator's goal is to 
recommend $k$ friends for each user
based on their $k$-nearest neighbors.

For our experiments, we used the following two datasets:

\smallskip
\noindent{\textbf{MovieLens.}}
The MovieLens 25m dataset \cite{MovieLens25} 
contains $162000$ users with ratings across $62000$ movies, with ratings ranging from $1$ to $5$. 
We normalized the scores, i.e., to mean $0$, and gave unseen movies a score of $0$.
For each user, we 
constructed 
a rating vector
that consists of 
the user's rating for each movie.

\smallskip
\noindent{\textbf{Foursquare.}}
The Foursquare dataset (Global-scale Check-in Dataset with User Social Networks) \cite{Yang_WWW19} contains $90048627$ check-ins by $2733324$ users on POIs all over the world. 
We extracted $107091$ POIs in New York and $10000$ users who have visited at least one POI in New York. 
For each user, we constructed a visit-count vector, which consists of a visit-count value for each POI. 

\smallskip
For both datasets, we generated 
input (rating/visit-count) vectors 
of length 
$n=100, 500, 1000$ 
to evaluate the effectiveness of LSH.
Reduced vector lengths were used because 
LSH has poor utility for 
larger vector lengths
and the utility of our mechanisms requires a good baseline utility for LSH. 

We computed the $k$ nearest neighbors w.r.t. the angular distance $d_\theta$ for 1000 users for $k = 1, 5, 10$ using standard NNS (i.e., pairwise comparisons over all inputs).
The distributions of True Nearest Neighbor distances are shown in \Fig{fig:truenn_distances}. 

\begin{figure}[t]
   \begin{subfigure}[b]{.49\columnwidth}
      \centering
      \caption{MovieLens Dataset}
      \includegraphics[width=.85\linewidth]{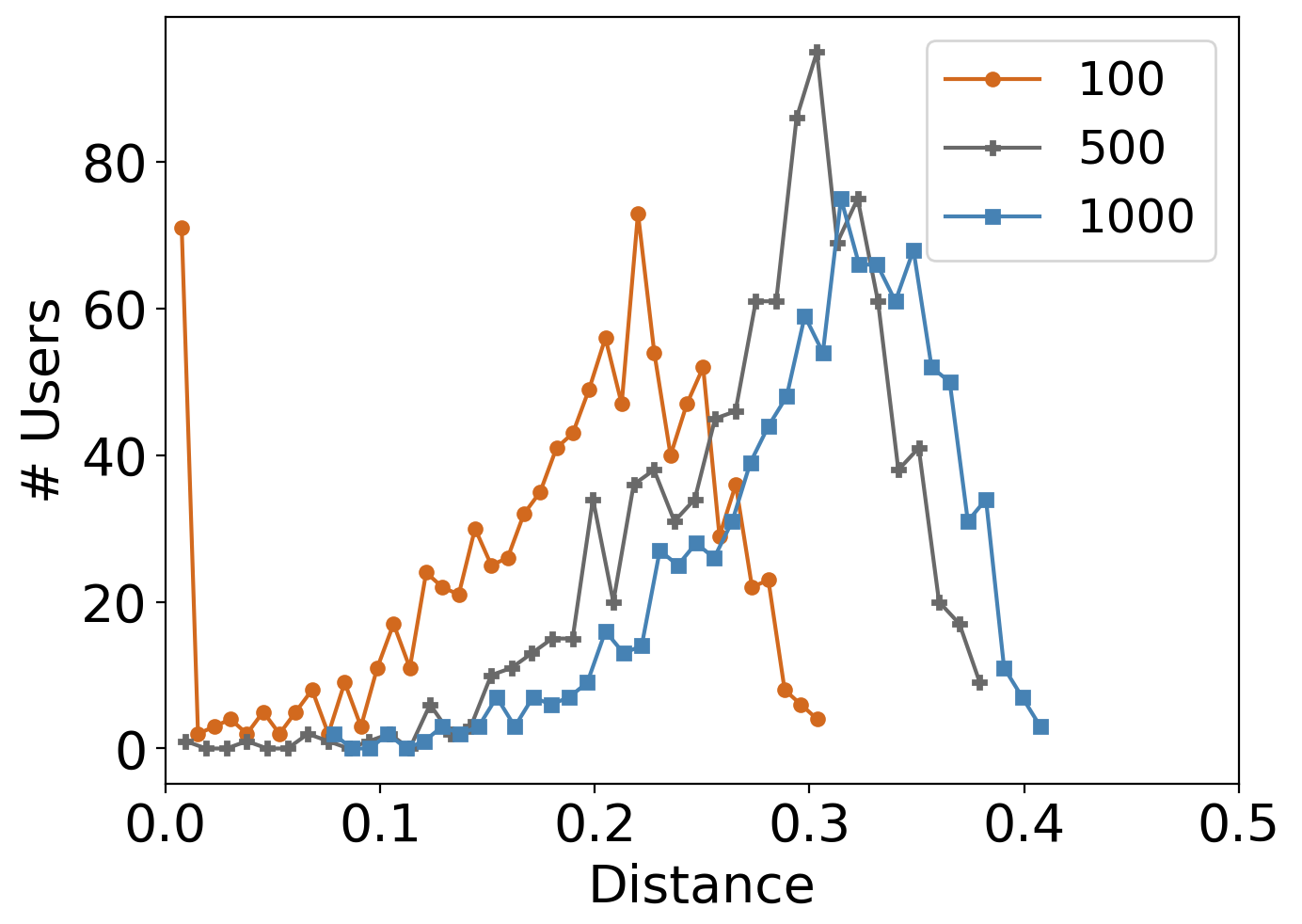}
    \end{subfigure}
    \begin{subfigure}[b]{.49\columnwidth}
      \centering
       \caption{Foursquare Dataset}
      \includegraphics[width=.85\linewidth]{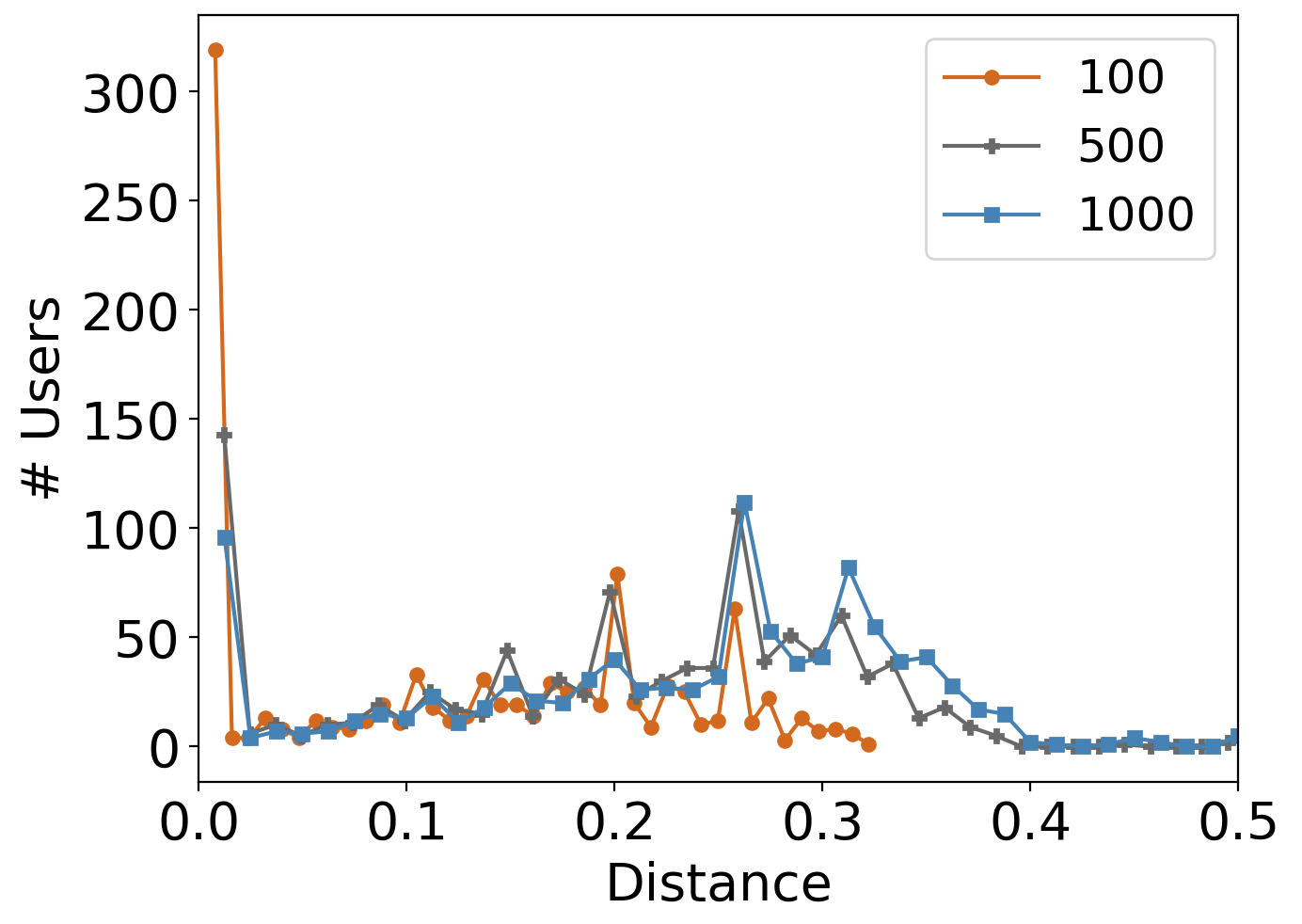}
    \end{subfigure}
  \caption{Distributions of angular distances $d_\theta$ to nearest neighbor for $k = 1$ for each user, plotted for vectors with dimensions 100, 500 and 1000. The distance 0.5 represents orthogonal vectors; i.e., having no items in common. The privacy guarantee for users is a function of the distance $d_\theta$ to their nearest neighbors.}
  \label{fig:truenn_distances}
\end{figure}

$\kappa$-bit LSH (for $\kappa = 10, 20, 50$) was implemented using the \emph{random-projection-based hashing}. 
For each user, we then computed their $k$ nearest neighbors for $k = 1, 5, 10$ using the Hamming distance on bitstrings. %
To compute the overall $(\xi, \delta)$-\XDP{} guarantee as per \Prop{prop:tighterXDPofMechanism}, we fixed $\delta = 0.01$ and $d_\theta = 0.1$ and varied $\varepsilon$ to generate 
$\xi$ 
values in the range $0.1$ to 
$20$.

\Fig{fig:truenn_distances} shows that about $43\%$ (resp.~$16\%$) of input vectors with $100$ (resp.~$1000$) dimensions are within the distance of $0.1$ in the Foursquare dataset. 
Thus, extended \DP{} with $d_\theta = 0.1$ is useful to hide such input vectors.

\subsection{Comparing Privacy and Utility}\label{sect:exp:comparison}
We use the angular distance $d_\theta$ as our utility measure, i.e., to determine similar users for the purposes of recommendations.
For utility loss, we use 
\Def{def:utility-loss}
instantiated with the angular distance $d_\theta$.
We compare the utility loss of each mechanism w.r.t. a comparable privacy guarantee, namely the overall 
privacy budget 
$\varepsilon\deuc(\bmx, \bmx')$ for LapLSH and 
$\xi_\alpha(\bmx, \bmx')$ for LSHRR (\Prop{prop:tighterXDPofMechanism}). 
However, as LSHRR's privacy guarantee depends on the angular distance $d_\theta$ 
and LapLSH's depends on $\deuc$, 
they cannot be compared directly.
For comparison using the same metric,
we 
use 
the relationship between the Euclidean and angular distances
for normalized vectors $\bmx, \bmx'$:
\begin{align}\label{eq:transform:metric}
      \deuc(\bmx, \bmx') ~=~ \sqrt{2 - 2\cos({\pi{\cdot}d_\theta(\bmx, \bmx')})}~.
\end{align}

We normalized input vectors to length 1 (noting that the normalization does not affect the angular distance, hence utility), and transformed 
$\varepsilon\deuc(\bmx, \bmx')$ into $\xi_\alpha(\bmx, \bmx')$ using \eqref{eq:transform:metric}
(Since 
$\xi_\alpha(\bmx, \bmx')$ depends on $\alpha$ and $d_\theta(\bmx, \bmx')$, we perform comparisons against various reasonable ranges of these variables).

We note that the trade-off between privacy and utility means that users with similar profiles will be indistinguishable from each other, whereas users with very different profiles can be distinguished. 
This is an inherent trade-off determined by the correlation between the sensitive and useful information to be released.

\subsection{Experimental Results}

We compared the performance of LapLSH and LSHRR 
with that of vanilla LSH 
in \Fig{fig:utility_lap}. 
We observe that 
LSHRR outperforms LapLSH when the dimension of the input vector is $n=100$, $500$, or $1000$. 
This 
is because LapLSH needs to add noise for each element of the input vector (even if the vector is sparse and includes many zero elements) and the total amount of noise is very large in high-dimensional data. 
In contrast, when the vector length is $n=50$, LapLSH ($\kappa=50$ bits) outperforms LSHRR ($\kappa=50$ bits).
We conjecture that
this is because 
the total amount of noise
used
in LapLSH is small 
for
low-dimensional data
whereas
LSHRR needs to add 
a large amount of
noise for each element of the hash when the hash length $\kappa$ is large. 
We expect LapLSH performance to improve further over LSHRR for smaller values of $n$.

Interestingly, we observe that although the performance of LSH degrades as 
the hash length $\kappa$ 
decreases, the performance of LSHRR and LapLSH both remain relatively stable. 
This is mainly because when 
$\kappa$ is $5$ times larger, the amount of information expressed by the 
hash 
can be roughly $5$ times larger whereas the amount of noise added to each bit is also $5$ times larger.
When the privacy budget is $\xi=20$, 
the performance of LSHRR on larger bit-lengths 
($\kappa=20$ or $50$) 
overtakes the performance of 10-bit LSHRR. This is because the utility 
loss 
of LSHRR is bounded below by the utility loss of the corresponding LSH; i.e., LSHRR converges to LSH with the same hash length $\kappa$ as $\xi$ increases.

\begin{figure}[!ht]
   \begin{subfigure}[b]{.99\columnwidth}
      \centering
       \caption{MovieLens Dataset}
      \includegraphics[width=.99\linewidth]{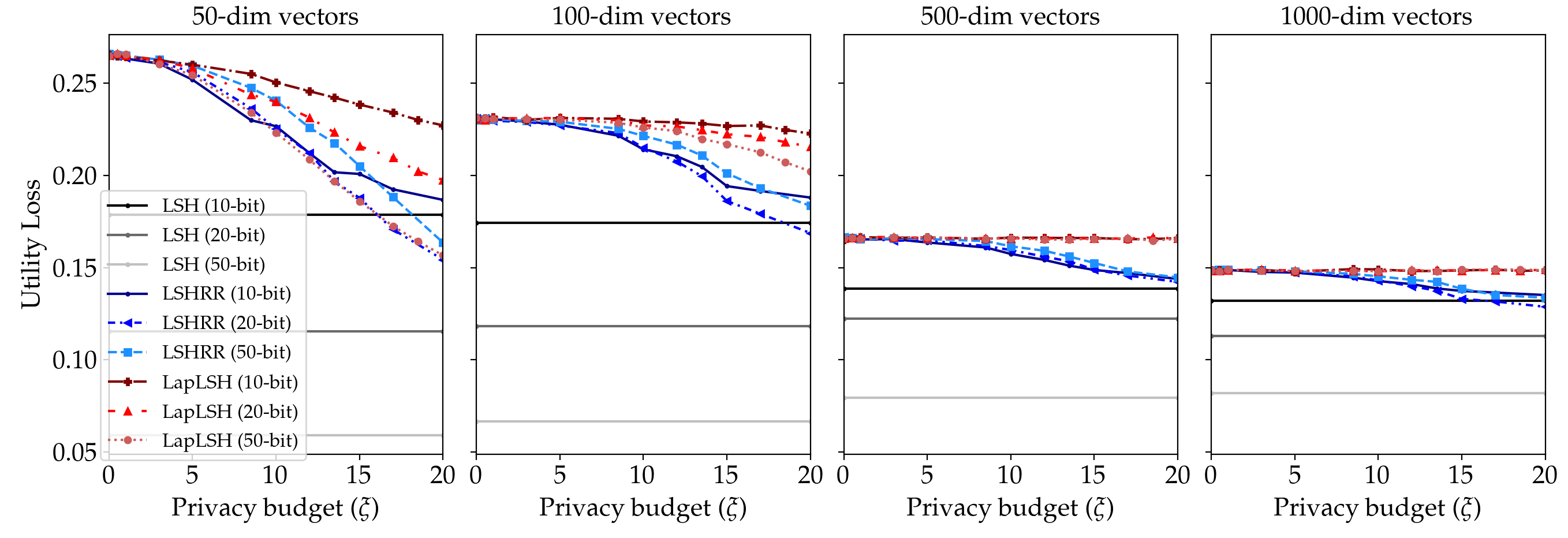}
    \end{subfigure}
 \vspace*{6pt}%
    \begin{subfigure}[b]{.99\columnwidth}
      \centering
      \caption{Foursquare Dataset}
      \includegraphics[width=.99\linewidth]{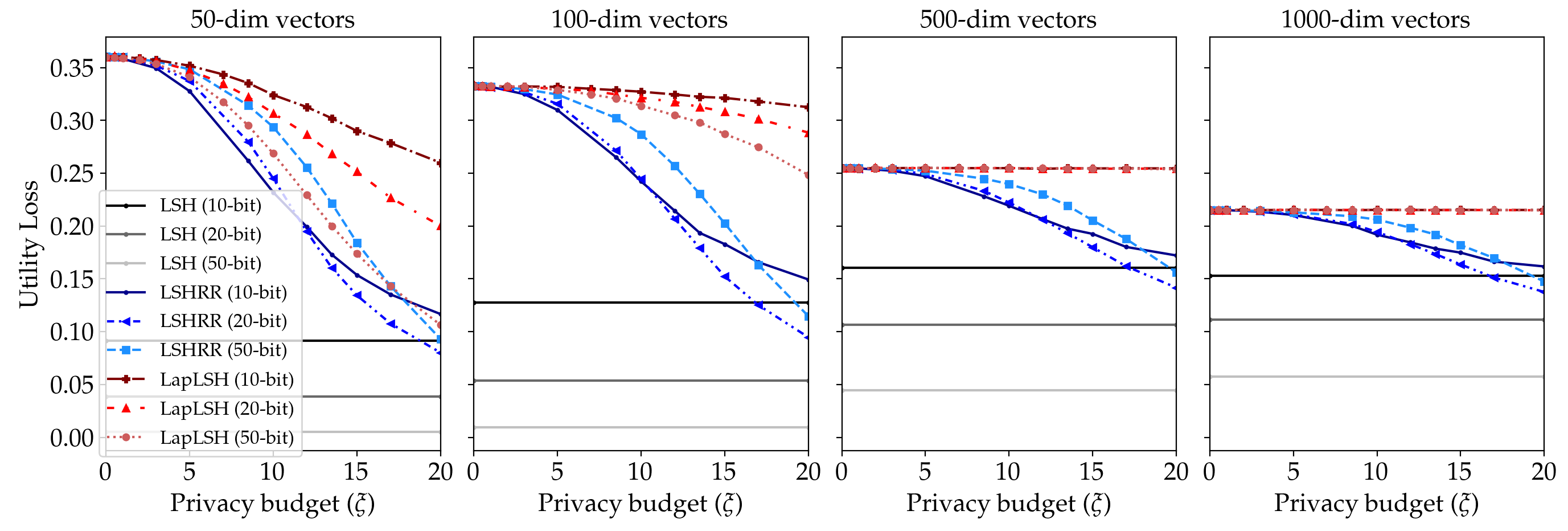}
    \end{subfigure}
      \caption{Utility loss (y-axis) versus privacy budget $\xi$ (x-axis) for LSHRR, LapLSH and LSH on $n$-dimensional vectors. $\xi$ is computed for various
      $\kappa$, and $d_\theta=0.1$.}
  \label{fig:utility_lap}
\end{figure}

\Fig{fig:utility_lap} also shows that when the total privacy budget $\xi$ is around $2$, LSHRR achieves lower utility loss than a uniformly random hash, i.e., 
LSHRR when the total privacy budget is $0$. 
LSHRR achieves 
much lower utility loss 
when the total privacy budget is around $5$. 
We can interpret the value of the total privacy budget in terms of the flip probability in the RR. 
For example, when we use the $20$-bit hash, the total privacy budget of $5$ for $d_\theta = 0.05$ corresponds to the case in which the RR flips each bit of the hash with the probability approx. $0.27$. 
Therefore, we flip around $5$-bits on average out of $20$-bits in this case.

We also note that the total privacy budget 
used in our experiments
is much smaller than the privacy budget $\epsilon$ 
previously used
in the low privacy regime \cite{Kairouz:16:ICML}. 
Specifically, Kairouz \textit{et al.} \cite{Kairouz:16:ICML}, and subsequent works (e.g., \cite{Acharya_AISTATS19,Murakami:19:USENIX,Wang_arXiv16,Ye_ISIT17})
refer to $\epsilon = \ln |\calx|$ as a privacy budget in the low privacy regime. 
Since our experiments deal with high-dimensional data, 
a
privacy budget 
of $\ln |\calx|$ would be
extremely large.
For example, when we use the $1000$-dimensional rating vector in the MovieLens dataset, 
the privacy budget in the low privacy regime is: $\epsilon = \ln |\calx| = \ln 5^{1000} = 1609$. 
The total privacy budget 
in our experiments 
($\xi \leq 20$) 
is much smaller than this value, and falls into the \textit{medium privacy regime} \cite{Acharya_AISTATS19,Ye_ISIT17}.

Note that \LDP{} requires a much larger privacy budget than extended \DP{}. 
For example, 
by Proposition~\ref{prop:tighterXDPofMechanism}, 
when 
$\kappa=50$ and 
$d_\theta=0.05$ (resp.~$0.1$), 
the total privacy budget $\xi=20$ in extended \DP{} corresponds to the total privacy budget 
$120$ (resp.~$80$) in \LDP{}. 
More details 
are shown 
in Appendix~\ref{sec:appendix:experiments}.

Finally, we
compare LSHRR with LapLSH in terms of time complexity and general applicability. 
For time complexity, LapLSH requires 
$O(n\kappa)$ 
operations (construction of $n$-dimensional noise, then $\kappa$-bit hashing). 
In contrast, 
LSHRR requires 
$O(m\kappa)$ operations, 
where $m$ is the number of non-zero elements in the input vector ($\kappa$-bit hashing on 
$m$ 
non-zero elements followed by $\kappa$-randomized response). 
Since $m \ll n$ in practice, 
LSHRR is 
significantly more efficient.

For general applicability, 
LSHRR 
can be used with other metrics such as 
the Jaccard metric \cite{Broder_JCSS00}, 
Earth Mover's metric \cite{Charikar_STOC02}, 
and $l_p$ metric \cite{Datar_SCG04} 
by choosing a suitable 
LSH 
function, 
whereas 
LapLSH 
is designed for the Euclidean metric only. 
Thus, 
LSHRR has more potential applications than LapLSH.

In summary, we find that LSHRR is better than LapLSH in terms of both time complexity and general applicability, and provides high utility with a reasonable privacy level for a high-dimensional data ($100$ dimensions or more).

\subsection{Inapplicability of the RAPPOR}
\label{sub:RAPPOR}
We finally explain that 
neither the RAPPOR \cite{Erlingsson_CCS14} nor the generalized RAPPOR \cite{Wang_USENIX17} 
can be used for
friend matching based on high-dimensional personal data. 
These mechanisms apply a Bloom filter to an input vector before applying the randomized response.
Typically, this Bloom filter is a hash function that neither allows for efficiently finding an input from its hash value, nor preserves the metric $\dX$ over the inputs.
For instance, \cite{Erlingsson_CCS14} uses MD5 to implement the Bloom filter.

Let us consider two approaches to 
perform 
the nearest neighbor search using RAPPOR:
\textit{comparing two hash values} and \textit{comparing two input vectors}. 

In the first approach, the data collector calculates the Hamming distance between obfuscated hash values. 
Then the utility is completely lost, because the Bloom filter does not preserve the metric $\dX$ over the inputs. 
Hence we cannot recommend friends based on the proximity of input vector in this approach.

In the second approach, the data collector tries to invert obfuscated hash values to the original input vector, and calculates the angular distance between the input vectors to find nearest neighbors.
Since the Bloom filter may not allow for efficiently finding an input from its hash value, the data collectors need to perform exhaustive searches, i.e., to compute the hash values of all possible input data $\calx$.
However, this is computationally intractable when the input domain $\calx$ is very large.
In particular, our setting deals with high-dimensional input data (e.g., $|\calx| = 5^{1000}$ in the $1000$-dimensional MovieLens rating vector), and thus it is computationally infeasible to invert hash values into input vectors.

In summary, the first approach (comparing two hashes) results in a complete loss of utility, and the second approach (comparing two input vectors) is computationally infeasible when the input data are in a high-dimensional space. Therefore, the RAPPOR cannot be applied to our problem of friend matching. 
The same issue applies to a generalized version of the RAPPOR \cite{Wang_USENIX17}.

In contrast, our mechanisms can be applied to friend matching even when 
$|\calx|$ is very large, 
because LSH allows us to approximately compare the distance between the input vectors without computing them from their hash values.

\section{Conclusion}
\label{sec:conclusion}
In this paper, we proposed two extended \DP{} mechanisms LSHRR and LapLSH.
We showed that LSH itself does not provide privacy guarantees and could result in complete privacy collapse in some situations.
We then proved that LSHRR and LapLSH provide rigorous guarantees of extended \DP{}.
To our knowledge, this work is the first to provide extended \DP{} with the angular distance.

By experiments with real datasets, we show that 
LSHRR outperforms LapLSH
on high-dimensional data.
We also show that LSHRR provides high utility for a high-dimensional vector,
thus enabling 
friend matching with rigorous privacy guarantees and high utility.

\bibliographystyle{splncs04}
\bibliography{short2}

\appendix

\section{Total Privacy Budgets in Extended \DP{} and \LDP{}}
\label{sec:appendix:experiments}

Table~\ref{tab:lsh_loss} shows total privacy budgets in extended \DP{} and \LDP{} calculated from Proposition~\ref{prop:tighterXDPofMechanism}
and the fact that the angular distance is $0.5$ or smaller.

For example, when $d_\theta=0.05$ 
and $\kappa=10$, $20$, and $50$, 
the total privacy budget $\xi=20$ in extended \DP{} corresponds to the total privacy budget of $55$, $79$, and $120$, respectively, in \LDP{}.

\begin{table}[h]
\centering
\caption{Total privacy budgets in extended \DP{} (\XDP{}) and \LDP{} when $d_\theta = 0.05$ or $0.1$, $\kappa = 10$, $20$, or $50$, and $\delta = 0.01$.}
(a) $d_\theta = 0.05$\\
\begin{tabular}{ c || c | c | c | c }
   \hline
   Total privacy budget $\xi$ in \XDP{} &  1 & 5 & 10 & 20 \\
   \hline
   Total privacy budget in \LDP{} ($\kappa=10/20/50$) &  3/4/6 & 14/20/30 & 28/40/60 & 55/79/120 \\
   \hline
\end{tabular}\\
(b) $d_\theta = 0.1$\\
\begin{tabular}{ c || c | c | c | c }
   \hline
   Total privacy budget $\xi$ in \XDP{} &  1 & 5 & 10 & 20 \\
   \hline
   Total privacy budget in \LDP{} ($\kappa=10/20/50$) &  2/3/4 & 10/14/20 & 21/28/40 & 42/57/80 \\
   \hline
\end{tabular}
\label{tab:lsh_loss}
\end{table}

\arxiv{
\section{Proofs for the Technical Results}
\label{sec:proofs}
We first recall Chernoff bound, which is used in the proof for Lemma~\ref{lem:CXDPimpliesPXDP}.
\begin{restatable}[Chernoff bound]{lem}{Chernoff}
\label{lem:Chernoff}
Let $Z$ be a real-valued random variable.
Then for all $t \in\reals$,
\[
\Pr[ Z \ge t] \leq
\min_{s\in\reals} \frac{\expect[\exp(s Z)]}{\exp(s t)}
{.}
\]
\end{restatable}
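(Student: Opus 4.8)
The plan is to derive the bound for each fixed exponential weight $s$ via Markov's inequality and then optimize over $s$. For any $s \in \realspos$, the map $z \mapsto \exp(sz)$ is strictly increasing, so the events $\{Z \ge t\}$ and $\{\exp(sZ) \ge \exp(st)\}$ coincide; applying Markov's inequality to the nonnegative random variable $\exp(sZ)$ then yields $\Pr[Z \ge t] = \Pr[\exp(sZ) \ge \exp(st)] \le \expect[\exp(sZ)]/\exp(st)$. The boundary choice $s = 0$ gives the trivial bound $\expect[\exp(0 \cdot Z)]/\exp(0) = 1 \ge \Pr[Z \ge t]$. Because the left-hand side is independent of $s$, all of these per-$s$ inequalities hold simultaneously, so I may pass to the infimum over the admissible range and obtain $\Pr[Z \ge t] \le \inf_{s \ge 0} \expect[\exp(sZ)]/\exp(st)$, which is a $\min$ whenever the infimum is attained (as it is at an interior $s^\ast > 0$ in the $\tau$-subgaussian setting where this lemma is invoked).

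The step I expect to be the genuine obstacle is the \emph{range} over which the minimum is taken, since the Markov argument above is sound only for $s \ge 0$. For $s < 0$ the map $z \mapsto \exp(sz)$ is decreasing, so the same manipulation instead controls the \emph{lower} tail $\Pr[Z \le t]$, and the ratio $\expect[\exp(sZ)]/\exp(st)$ may fall below $\Pr[Z \ge t]$. A two-atom example shows this is not vacuous: take $\Pr[Z = 0] = \tfrac{1}{10}$, $\Pr[Z = 2] = \tfrac{9}{10}$, and $t = 1$, so that $\Pr[Z \ge t] = \tfrac{9}{10}$, while $\expect[\exp(sZ)]/\exp(st) = \tfrac{1}{10}\exp(-s) + \tfrac{9}{10}\exp(s)$ attains its minimum over $\reals$ at $s = -\ln 3$ with value $\tfrac{3}{5} < \tfrac{9}{10}$. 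Hence minimizing over all of $\reals$, as literally written, does not yield a valid upper bound on $\Pr[Z \ge t]$.

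Accordingly, I would state and prove the lemma with the minimization restricted to $s \ge 0$ (equivalently $\inf_{s \ge 0}$), which is the standard Chernoff form and exactly what is needed downstream: in the proof of \Lem{lem:CXDPimpliesPXDP} the bound is instantiated at a single strictly positive $s$ to turn the subgaussian moment-generating estimate into a tail probability, so the negative-$s$ values play no role. With this correction the proof reduces to the monotone-transform-plus-Markov argument given above, and no further machinery is required.
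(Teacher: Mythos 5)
Your proof is correct, and you have in fact caught a genuine error in the lemma as the paper states it. Note first that the paper contains no proof of this lemma at all: it is merely ``recalled'' at the head of the appendix as a standard fact, so the statement itself is the only thing to audit, and your audit is right. Your two-atom counterexample is valid: with $\Pr[Z=0]=\nicefrac{1}{10}$, $\Pr[Z=2]=\nicefrac{9}{10}$ and $t=1$, the ratio $\tfrac{1}{10}e^{-s}+\tfrac{9}{10}e^{s}$ has minimum $2\sqrt{\nicefrac{9}{100}}=\nicefrac{3}{5}$ at $s=-\ln 3$ by AM--GM, strictly below $\Pr[Z\ge 1]=\nicefrac{9}{10}$, so minimizing over all $s\in\reals$ does \emph{not} upper-bound the upper tail; for $s<0$ the exponential transform reverses the event, as you say. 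Your monotone-transform-plus-Markov argument is the standard derivation and is sound exactly on $s\ge 0$, with $s=0$ giving the trivial bound $1$. One further nicety: even over $s\ge 0$ the infimum need not be attained (take $Z\equiv 0$ and $t=1$, where the ratio $e^{-s}$ tends to $0$ but never reaches it), so the corrected statement should read $\inf_{s\ge 0}$ rather than $\min$; your parenthetical hedge about attainment is pointing at the right issue.

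Your claim that the correction is harmless downstream is also accurate. In the proof of \Lem{lem:CXDPimpliesPXDP} the paper passes from the Chernoff ratio to $\min_{s\in\reals}\exp\bigl(\tfrac{a s^2}{2}-st\bigr)$ and evaluates at $s=\nicefrac{t}{a}$ with $t=\tau\sqrt{-2\ln\delta}\ge 0$ and $a=\tau^2>0$; that is, only the single inequality $\Pr[Z\ge t]\le \expect[\exp(sZ)]/\exp(st)$ at one nonnegative $s$ is ever used, and that instance follows from your Markov argument. So \Lem{lem:CXDPimpliesPXDP}, and hence Theorem~\ref{thm:generalXDPofMechanism}, survive unchanged once the lemma is restated with $s\ge 0$.
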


Next we recall Hoeffding's lemma, which is used in the proof for Proposition~\ref{prop:CXDPofMechanism}.
\begin{restatable}[Hoeffding]{lem}{Hoeffding}
\label{lem:Hoeffding}
Let $a, b \in \reals$, and $Z$ be a real-valued random variable such that $\expect[Z] = \mu$ and that $a \leq Z \leq b$.
Then for all $t \in\reals$,
\[\expect[ \exp(t Z) ] \le
\exp\bigl(t\mu + {\textstyle\frac{t^2}{8} \bigl( b - a \bigr)^2}\bigr)
{.}
\]
\end{restatable}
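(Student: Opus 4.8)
The plan is to prove this moment-generating-function bound by combining a convexity estimate with a second-order Taylor argument. First I would dispose of the degenerate case $a = b$, where $Z = \mu$ almost surely and both sides equal $\exp(ta)$, so assume $a < b$. The starting observation is that every $z \in [a,b]$ can be written as the convex combination $z = \frac{b-z}{b-a}\,a + \frac{z-a}{b-a}\,b$, and since $s \mapsto \exp(ts)$ is convex this yields the pointwise inequality $\exp(tz) \le \frac{b-z}{b-a}\exp(ta) + \frac{z-a}{b-a}\exp(tb)$. Taking expectations and using $\expect[Z] = \mu$ gives
\[
\expect[\exp(tZ)] \le \tfrac{b-\mu}{b-a}\exp(ta) + \tfrac{\mu-a}{b-a}\exp(tb).
\]

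Next I would reparametrize to isolate the target exponent. Setting $p = \frac{\mu-a}{b-a} \in [0,1]$ and $u = t(b-a)$, the right-hand side factors as $\exp(ta)\bigl[(1-p) + p\,e^{u}\bigr]$, so it suffices to show $\ln\bigl[(1-p)+p\,e^{u}\bigr] \le pu + \tfrac{u^2}{8}$. Indeed, after substituting $t\mu = \tfrac{a}{b-a}u + pu$ this inequality is exactly the claimed bound $\expect[\exp(tZ)] \le \exp\bigl(t\mu + \tfrac{t^2(b-a)^2}{8}\bigr)$. Introducing $\psi(u) = \ln\bigl[(1-p)+p\,e^{u}\bigr] - pu$, the goal reduces to the clean analytic statement $\psi(u) \le u^2/8$ for all real $u$.

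I would close the argument by a Taylor expansion of $\psi$ about $0$. A direct computation gives $\psi(0) = 0$ and $\psi'(0) = 0$, while $\psi''(u) = g(u)\bigl(1 - g(u)\bigr)$ where $g(u) = \frac{p\,e^{u}}{(1-p)+p\,e^{u}} \in [0,1]$. The key estimate is the elementary fact that $x(1-x) \le \tfrac14$ on $[0,1]$, so $\psi''(u) \le \tfrac14$ uniformly in $u$. Taylor's theorem with Lagrange remainder then yields $\psi(u) = \tfrac12\,\psi''(\theta)\,u^2 \le u^2/8$ for some $\theta$ between $0$ and $u$, which is the desired bound after recalling $u = t(b-a)$.

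The main obstacle --- really the only nontrivial step --- is the uniform second-derivative bound $\psi''(u) \le \tfrac14$: one must recognize that $\psi''$ has the logistic form $g(1-g)$ and invoke the maximum of $x(1-x)$. Everything else (the convexity inequality, the reparametrization, and the Taylor remainder) is routine, and the endpoint cases $p \in \{0,1\}$, where $Z$ is degenerate and $\psi \equiv 0$, are checked directly.
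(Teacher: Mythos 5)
Your proof is correct, but there is nothing in the paper to compare it against: the paper does not prove Lemma~\ref{lem:Hoeffding} at all, merely recalling it in the appendix as the classical Hoeffding lemma used as an ingredient in the proof of Proposition~\ref{prop:CXDPofMechanism}. What you have written is the standard textbook derivation --- convexity of $s \mapsto e^{ts}$ reduces to the two-point distribution, the substitution $p = \frac{\mu-a}{b-a}$, $u = t(b-a)$ reduces the claim to $\psi(u) \le u^2/8$, and the logistic identity $\psi''(u) = g(u)(1-g(u)) \le \frac14$ together with Taylor's theorem with Lagrange remainder (using $\psi(0) = \psi'(0) = 0$) closes it --- and every step checks out, including the degenerate cases $a = b$ and $p \in \{0,1\}$, and the validity for both signs of $t$. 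So your argument is a complete and correct proof of a statement the paper leaves as a cited fact.
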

Note that Lemma~\ref{lem:Hoeffding} implies that 
$\expect[ \exp(t (Z - \expect[Z])) ] \le
\exp\bigl({\textstyle\frac{t^2}{8} \bigl( b - a \bigr)^2}\bigr)
$.
\vspace{1ex}

Then we recall Chernoff-Hoeffding Theorem, which is used in the proof for Theorem~\ref{thm:generalXDPofMechanism}.
Recall that the Kullback-Leibler divergence $\DKL(a \| b)$ between Bernoulli distributed random variables with parameters $a$ and $b$ is defined by:
\[
\DKL(a \| b) = 
a \ln {\textstyle\frac{a}{b}} + (1-a) \ln {\textstyle\frac{1-a}{1-b}}.
\]

\begin{restatable}[Chernoff-Hoeffding]{lem}{ChernoffHoeffding}
\label{lem:Chernoff-Hoeffding}
Let $Z \sim \Bino(k, p)$ be a binomially distributed random variable where $k$ is the total number of experiments and $p$ is the probability that an experiment yields a successful outcome.
Then for any $\alpha \in\realspos$,
\[
\Pr[ Z \ge k(p + \alpha) ] \le
\exp\bigl( - k \DKL(p+\alpha \| p) \bigr)
{.}
\]
By relaxing this, we have a simpler bound:
\[
\Pr[ Z \ge k(p + \alpha) ] \le
\exp\bigl( - 2 k \alpha^2 \bigr)
{.}
\]
\end{restatable}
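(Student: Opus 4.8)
The plan is to derive both bounds by the exponential moment (Chernoff) method, using the stated Chernoff bound (Lemma~\ref{lem:Chernoff}) applied to $Z \sim \Bino(k,p)$. The tight first bound comes from optimizing the Chernoff exponent against the \emph{exact} moment generating function of $Z$, whereas the relaxed second bound comes either from lower-bounding the resulting Kullback--Leibler exponent or, more directly, from replacing the exact MGF by the sub-Gaussian estimate furnished by Hoeffding's lemma (Lemma~\ref{lem:Hoeffding}).

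For the first bound I would write $Z = \sum_{i=1}^{k} X_i$ as a sum of i.i.d.\ Bernoulli($p$) variables, so that independence gives the factorized MGF $\expect[\exp(sZ)] = (1 - p + p e^{s})^{k}$. Setting $t = k(p+\alpha)$ and abbreviating $q = p+\alpha$ in Lemma~\ref{lem:Chernoff}, and noting that for an upper tail it suffices to take $s \ge 0$, the bound becomes $\Pr[Z \ge kq] \le \min_{s\ge 0}\bigl( (1-p+pe^{s})\, e^{-sq} \bigr)^{k}$. Minimizing $f(s) = \ln(1-p+pe^{s}) - s q$ via $f'(s)=0$ produces the stationary point $e^{s^{\ast}} = \frac{q(1-p)}{p(1-q)}$, which lies in $(0,\infty)$ exactly because $q = p+\alpha > p$; convexity of $f$ confirms this is the minimizer over $s \ge 0$.

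The crux is the algebraic simplification $f(s^{\ast}) = -\DKL(q \,\|\, p)$. Substituting $e^{s^{\ast}}$ gives $1 - p + p e^{s^{\ast}} = \frac{1-p}{1-q}$, and collecting the logarithmic terms reduces $f(s^{\ast})$ to $(1-q)\ln\frac{1-p}{1-q} + q\ln\frac{p}{q}$, which is precisely $-\DKL(q\,\|\,p)$ by the definition of the Bernoulli KL divergence; raising to the $k$-th power yields $\Pr[Z \ge k(p+\alpha)] \le \exp(-k\,\DKL(p+\alpha\,\|\,p))$. I expect this simplification to be the main obstacle: it is routine but error-prone, and I would handle the boundary cases separately, namely $p+\alpha > 1$ (the event is impossible and the bound holds trivially) and $p+\alpha = 1$ (where the vanishing $(1-q)\ln\frac{1-q}{1-p}$ term is read with the usual $0\ln 0 = 0$ convention and the bound in fact becomes equality $p^{k} = \exp(-k\ln\frac1p)$).

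For the relaxed bound the quickest route reuses a lemma already in the paper: applying Hoeffding's lemma (Lemma~\ref{lem:Hoeffding}) to each summand $X_i \in [0,1]$ with mean $p$ gives $\expect[\exp(t(X_i-p))] \le \exp(t^{2}/8)$, since the centered summand lies in an interval of width $1$; by independence $\expect[\exp(t(Z-kp))] \le \exp(k t^{2}/8)$. Feeding this into Lemma~\ref{lem:Chernoff} and minimizing $t^{2}/8 - t\alpha$ over $t \ge 0$ (attained at $t = 4\alpha$) produces the exponent $-2k\alpha^{2}$, i.e.\ $\Pr[Z \ge k(p+\alpha)] \le \exp(-2k\alpha^{2})$. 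Alternatively the same inequality follows from the first bound via the Bernoulli form of Pinsker's inequality $\DKL(p+\alpha\,\|\,p) \ge 2\alpha^{2}$; I would note this route but favour the Hoeffding derivation, as it is self-contained within the stated lemmas.
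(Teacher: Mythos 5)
Your proof is correct, but note that the paper does not actually prove this lemma: it is merely \emph{recalled} as the classical Chernoff--Hoeffding theorem, in the same way Lemmas~\ref{lem:Chernoff} and~\ref{lem:Hoeffding} are recalled without proof, so there is no in-paper argument to compare against. What you supply is the standard derivation, and it checks out: the stationary point $e^{s^{\ast}} = \frac{q(1-p)}{p(1-q)}$ with $q = p+\alpha$ is correct, the simplification $1-p+pe^{s^{\ast}} = \frac{1-p}{1-q}$ holds, and the optimized exponent does collapse to $(1-q)\ln\frac{1-p}{1-q} + q\ln\frac{p}{q} = -\DKL(q \,\|\, p)$ as required; your handling of $q \ge 1$ is the right caution, and for completeness you could also dispose of the degenerate endpoints $p \in \{0,1\}$, where the stationary-point formula divides by zero but the tail probability is trivially $0$ (or the bound trivial). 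For the relaxed bound, both of your routes are valid: the Hoeffding-lemma route (centered Bernoulli summands in an interval of width $1$, MGF bound $\exp(kt^{2}/8)$, minimum of $k(t^{2}/8 - t\alpha)$ at $t = 4\alpha$ giving $-2k\alpha^{2}$) has the virtue of being self-contained in the paper's stated lemmas, while the Bernoulli Pinsker inequality $\DKL(p+\alpha \,\|\, p) \ge 2\alpha^{2}$ is what the phrase ``by relaxing this'' in the statement most directly refers to; either justification suffices, and your preference for the former is defensible. One small observation relevant to how the lemma is used downstream (Proposition~\ref{prop:tighterXDPofMechanism}): the KL form implicitly assumes $p + \alpha \le 1$ and $p > 0$, which your boundary discussion covers but which the paper's statement leaves tacit.
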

\vspace{3ex}

We show the proofs for technical results as follows.

\utilityQH*
\begin{proof}
By the triangle inequality and $Q_H= \Qbrr \circ H$, we have:
\[
\dV(Q_H(x), \allowbreak Q_H(x')) \le \allowbreak \dV(\Qbrr \circ H(x), \allowbreak H(x)) + \dV(H(x), \allowbreak H(x')) + \dV(H(x'), \Qbrr \circ H(x')).\]
It follows from the definition of the bitwise RR $\Qrr$ that for any $\kappa$-bit string $v\in\calv$, the expected Hamming distance is
$\expect[\dV(v, \Qbrr(v))] = \frac{\kappa}{1 + e^{\varepsilon}}$.
Thus $\expect[\dV(\Qbrr \circ H(x), \allowbreak H(x)) + \dV(H(x'), \Qbrr \circ H(x'))] =  \frac{2\kappa}{1 + e^{\varepsilon}}$.
Hence we obtain the proposition.
\myqed
\end{proof}

We present LSHRR's privacy guarantee for hash values, which relies on the $\XDP{}$ of the bitwise RR $\Qbrr$ w.r.t. the Hamming distance $\dV$ as follows.

\begin{restatable}[\XDP{} of BRR]{prop}{XDPofRAPPOR}
\label{prop:XDPofRAPPOR}
The $(\varepsilon, \kappa)$-bitwise RR provides 
$(\epsilon\dV,0)$-$\XDP{}$.
\end{restatable}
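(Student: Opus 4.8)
The plan is to establish the pointwise likelihood-ratio bound for each individual output, which, because $\delta = 0$, immediately yields the set-wise bound by summation. Concretely, I would first observe that to prove $(\varepsilon\dV, 0)$-\XDP{} it suffices to show, for every pair $\bmv, \bmv' \in \calv$ and every single output $\bmy \in \calv$, that $\Qbrr(\bmv)[\bmy] \le e^{\varepsilon \dV(\bmv, \bmv')}\, \Qbrr(\bmv')[\bmy]$; summing this inequality over all $\bmy \in S$ then gives $\Qbrr(\bmv)[S] \le e^{\varepsilon \dV(\bmv, \bmv')}\, \Qbrr(\bmv')[S]$ for arbitrary $S \subseteq \calv$, which is exactly \Def{def:XDP-simple} with additive term $0$.

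Next, I would exploit the product structure of the bitwise RR. Since $\Qbrr(\bmv)[\bmy] = \prod_{i=1}^{\kappa} \Qrr(v_i)[y_i]$ by definition, the likelihood ratio factorizes as $\frac{\Qbrr(\bmv)[\bmy]}{\Qbrr(\bmv')[\bmy]} = \prod_{i=1}^{\kappa} \frac{\Qrr(v_i)[y_i]}{\Qrr(v'_i)[y_i]}$. I would then bound each factor separately. For a coordinate $i$ with $v_i = v'_i$ the two per-bit distributions coincide, so the factor equals $1$. For a coordinate with $v_i \neq v'_i$, a direct computation from \Def{def:RR} shows the factor equals either $\frac{e^\varepsilon/(e^\varepsilon+1)}{1/(e^\varepsilon+1)} = e^\varepsilon$ or its reciprocal $e^{-\varepsilon}$, according to whether $y_i$ agrees with $v_i$ or with $v'_i$; in either case the factor is at most $e^\varepsilon$.

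Multiplying these bounds, only the coordinates where $v_i \neq v'_i$ contribute a factor exceeding $1$, so the product is at most $e^{\varepsilon \cdot |\{i : v_i \neq v'_i\}|}$. The final step is to recognize that for binary strings the number of differing coordinates is exactly the Hamming distance, $|\{i : v_i \neq v'_i\}| = \sum_{i=1}^{\kappa}|v_i - v'_i| = \dV(\bmv, \bmv')$, giving $\frac{\Qbrr(\bmv)[\bmy]}{\Qbrr(\bmv')[\bmy]} \le e^{\varepsilon \dV(\bmv, \bmv')}$ as required.

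I do not expect a genuine obstacle here: the argument is essentially the standard ``group privacy over independent coordinates'' composition applied to the per-bit $\varepsilon$-\DP{} of $\Qrr$. The only point that needs a moment's care is matching the per-bit $e^\varepsilon$ factors to the Hamming-distance count, and in particular confirming that the disagreeing-bit factor is bounded by $e^\varepsilon$ \emph{uniformly} (rather than landing on $e^{-\varepsilon}$ for some outputs), so that the product bound holds for every $\bmy$ and the pointwise-to-setwise summation step goes through with $\delta = 0$.
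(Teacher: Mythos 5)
Your proof is correct and follows essentially the same route as the paper's: both factorize the likelihood ratio $\Qbrr(\bmv)[\bmy]/\Qbrr(\bmv')[\bmy]$ over the $\kappa$ independent coordinates, bound each per-bit factor by $e^{\varepsilon|v_i - v'_i|}$, and identify the number of disagreeing coordinates with $\dV(\bmv,\bmv')$. The only cosmetic difference is that the paper packages your case analysis (factor $1$ on agreeing bits, at most $e^\varepsilon$ on disagreeing bits) as a single application of the triangle inequality to the exponents, and leaves the pointwise-to-setwise summation implicit where you spell it out.
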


\begin{proof}
Recall the definition of the $\varepsilon$-RR $\Qrr$ in Definition~\ref{def:RR}.
Let $r = \frac{1}{e^{\varepsilon}+1}$,
$\bmv = (v_1, v_2, \ldots, v_\kappa) \in\calv$, $\bmv' = (v'_1, v'_2, \ldots, v'_\kappa)\in\calv$, and $\bmy = (y_1, y_2, \ldots, y_\kappa) \in\calv$.
By definition we obtain:
\begin{align*}
\Qbrr(\bmv)[\bmy] &= {\textstyle\prod_{i=1}^{\kappa}\,} r^{|y_i - v_i|} (1-r)^{1 - |y_i - v_i|} \\
\Qbrr(\bmv')[\bmy] &= {\textstyle\prod_{i=1}^{\kappa}\,} r^{|y_i - v'_i|} (1-r)^{1 - |y_i - v'_i|}
{.}
\end{align*}
By $\Qbrr(\bmv')[\bmy] > 0$ and the triangle inequality, we have:
\begin{align*}
{\textstyle \ln\frac{\Qbrr(\bmv)[\bmy]}{\Qbrr(\bmv')[\bmy]}}
&\le
\ln\prod_{i=1}^{\kappa} \bigl( {\textstyle\frac{1-r}{r}} \bigr)^{|v_i - v'_i|}
=
\ln \bigl( {\textstyle\frac{1-r}{r}} \bigr)^{\dV(\bmv, \bmv')}
= \varepsilon \dV(\bmv, \bmv')
{.}
\end{align*}
Therefore $\Qbrr$ provides 
$(\epsilon\dV,0)$-$\XDP{}$.
\myqed
\end{proof}

\worstPrivacyQH*
\begin{proof}
Let $\bmx, \bmx' \in \calx$ and $\bmy\in\calv$.
\begin{align*}
Q_H(\bmx)[\bmy]
&= \Qbrr(H(\bmx))[\bmy]
\\ &\le \varepsilon \dV(H(\bmx), H(\bmx')) \Qbrr(H(\bmx'))[\bmy]
\hspace{-1.5ex}
& \text{(by Proposition~\ref{prop:XDPofRAPPOR})}
\\ &= \varepsilon d_H(\bmx, \bmx') Q_H(\bmx')[\bmy]
& \text{(by the def. of $\deh$)}
\end{align*}
Hence $Q_H$  provides $(\deh,0)$-\XDP{}.
\myqed
\end{proof}

\worstDPQH*
\begin{proof}
Since $d_H(\bmx, \bmx') \le \kappa$ holds for all $\bmx, \bmx'$,
this proposition follows from Proposition~\ref{prop:worstPrivacyQH}.
\end{proof}

\arxiv{
\begin{restatable}[\CXDP{} $\Rightarrow$ \PXDP{}]{lem}{CXDPimpliesPXDP}
\label{lem:CXDPimpliesPXDP}
Let $\mu \in\realsnng$, $\tau \in\realspos$, $\lambda\in\Dists\calr$, $\alg_\lambda: \calx \rightarrow \Dists\caly$, and $d$ be a metric over $\calx$.
Let $\delta \in(0, 1]$,
$\varepsilon = \tau\sqrt{-2\ln\delta}$, and
$\xi(x, x') = \mu d(x, x')+\varepsilon$.
If $\alg_\lambda$ provides $(\mu, \tau, d)$-\CXDP{}, then 
it provides $(\xi, \delta)$-\PXDP{}.
\end{restatable}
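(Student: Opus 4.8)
The plan is to reduce the tail bound required by \PXDP{} to a standard subgaussian concentration estimate obtained from the Chernoff bound (\Lem{lem:Chernoff}). Fix $x, x'\in\calx$ and write $L = \Loss{x,x'}$ for the privacy loss random variable of $x$ over $x'$ w.r.t. $\lambda$, and let $m = \expect[L]$. The $(\mu, \tau, d)$-\CXDP{} hypothesis supplies exactly two facts: $m \le \mu\, d(x, x')$, and the centered variable $Z = L - m$ is $\tau$-subgaussian, i.e.\ $\expect[\exp(sZ)] \le \exp(\frac{s^2\tau^2}{2})$ for all $s\in\reals$.

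First I would strip the mean out of the target event. Since $m \le \mu\, d(x, x')$, the threshold $\xi(x,x') = \mu\, d(x, x') + \varepsilon$ is at least $m + \varepsilon$, so the event $\{\, L > \xi(x,x') \,\}$ is contained in $\{\, L > m + \varepsilon \,\} = \{\, Z > \varepsilon \,\}$. Hence it suffices to bound $\Prob[\, Z > \varepsilon \,]$, which is in turn at most $\Prob[\, Z \ge \varepsilon \,]$. Next I would apply the Chernoff bound to $Z$: for $s = \varepsilon/\tau^2 \ge 0$ we have $\Prob[\, Z \ge \varepsilon \,] \le \expect[\exp(sZ)] / \exp(s\varepsilon)$, and feeding in the subgaussian moment bound turns the right-hand side into $\exp\!\bigl(\frac{s^2\tau^2}{2} - s\varepsilon\bigr)$. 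This choice of $s$ minimizes the exponent and yields the Gaussian-type tail $\Prob[\, Z \ge \varepsilon \,] \le \exp\!\bigl(-\frac{\varepsilon^2}{2\tau^2}\bigr)$.

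Finally I would substitute the prescribed value $\varepsilon = \tau\sqrt{-2\ln\delta}$ (well-defined since $\delta \in (0,1]$ forces $-\ln\delta \ge 0$), which makes $\frac{\varepsilon^2}{2\tau^2} = -\ln\delta$ and hence $\exp\!\bigl(-\frac{\varepsilon^2}{2\tau^2}\bigr) = \delta$. Chaining this with the containment from the first step gives $\Prob[\, L > \xi(x,x') \,] \le \delta$, and since $x, x'$ were arbitrary this is precisely $(\xi, \delta)$-\PXDP{}.

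I do not expect a genuine obstacle here: the argument is a textbook subgaussian tail estimate dressed in the \CXDP{}/\PXDP{} vocabulary. The only points requiring a little care are bookkeeping ones, namely that \CXDP{} controls the mean only by the upper bound $m \le \mu\, d(x,x')$ rather than by equality (so the mean shift must be handled in the direction that enlarges the event), and the harmless passage between the strict inequality in the \PXDP{} definition and the non-strict inequality produced by \Lem{lem:Chernoff}. The optimization over $s$ and the substitution for $\varepsilon$ are the calculational core, but both are routine.
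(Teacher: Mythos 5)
Your proposal is correct and follows essentially the same route as the paper's proof: reduce the event $\{\Loss{x,x'} > \mu\, d(x,x') + \varepsilon\}$ to a tail event for the centered, $\tau$-subgaussian variable $Z$ using the mean bound $\expect[\Loss{x,x'}] \le \mu\, d(x,x')$, then apply the Chernoff bound with the optimal choice $s = \varepsilon/\tau^2$ and substitute $\varepsilon = \tau\sqrt{-2\ln\delta}$ to obtain $\delta$. The paper performs the same optimization by completing the square over $s$, so the two arguments coincide step for step.
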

}
\conference{\CXDPimpliesPXDP*}

\begin{proof}
\newcommand{\fixtau}[1]{}
Assume that $\alg_\lambda$ provides $(\mu, \tau, d)$-\CXDP{}.
Let $x, x'\in\calx$.
Then we will show 
$\Prob[\, \Loss{x,x'} > \mu d(x, x') + \varepsilon \,] \le \delta$
as follows.

Let $Z = \Loss{x,x'} - \expect[\Loss{x,x'}]$.
By the definition of \CXDP{}, we have:
\begin{align}\label{eq:CXDPimpliesPXDP:mu}
\expect[\Loss{x,x'}] \le \mu d(x, x'), 
\end{align}
and $Z$ is $\tau\fixtau{d(x, x')}$-subgaussian.
Let $t = \tau\sqrt{-2\ln\delta}\fixtau{\cdot d(x, x')}$ and $a = \tau^2$\fixtau{$a = (\tau d(x,x'))^2$}.
\fixtau{By $d(x, x') \neq 0$, we have $a \neq 0$.}
By the definition of subgaussian variables, $\expect[\exp(s Z)] \leq \exp(\frac{a s^2}{2})$ holds for any $s\in\reals$.
Thus we obtain:
\begin{align}
\Pr[ Z \ge t] 
&\leq
\min_{s\in\reals} \frac{\expect[\exp(s Z)]}{\exp(s t)}
 \conference{&\text{(by the Chernoff bound)}}
 \nonumber
\arxiv{\\& & \hspace{-11ex}\text{(by the Chernoff bound in Lemma~\ref{lem:Chernoff}})
 \nonumber}
\\ & \leq
\min_{s\in\reals} \exp\bigl( {\textstyle\frac{a s^2}{2}} - s t \bigr)
& \hspace{-3ex}\bigl(\text{by $\expect[\exp(s Z)] \leq \exp({\textstyle\frac{a s^2}{2}})$}\bigr) \nonumber
\\ & =
\min_{s\in\reals} \exp\bigl( {\textstyle\frac{a}{2} \bigl( s - \frac{t}{a} \bigr)^{\!2}\! - \frac{t^2}{2a}} \bigr)\hspace{-4ex}
\nonumber
\\ & =
\exp\bigl( {\textstyle - \frac{t^2}{2a}} \bigr)
& \bigl(\text{when $s = {\textstyle\frac{t}{a}}$}\bigr)
\label{eq:CXDP:PXDP:Chernoff}
\end{align}
Recall that $\varepsilon = \fixtau{\mu +}\tau\sqrt{-2\ln\delta}$ by definition.
We obtain:
\begin{align*}
& \mathbin{\phantom{=}}
\Pr[ \Loss{x,x'} > \mu d(x, x') +
\varepsilon] 
\\ & \leq
\Pr[ \Loss{x,x'} > \expect[\Loss{x,x'}] + \tau\sqrt{-2\ln\delta}\fixtau{\cdot d(x, x')}] 
& \text{(by \eqref{eq:CXDPimpliesPXDP:mu}
and the def. of $\varepsilon$)}
\\ & =
\Pr[ Z > \tau\sqrt{-2\ln\delta}\fixtau{\cdot d(x, x')}] 
& \text{(by the def. of $Z$)}
\\ & \leq
\exp\bigl( {\textstyle - \frac{( \tau\sqrt{-2\ln\delta}\fixtau{\cdot d(x, x')})^2}{2a}} \bigr)
& \hspace{-10ex}\text{(by \eqref{eq:CXDP:PXDP:Chernoff} and $t = \tau\sqrt{-2\ln\delta}\fixtau{\cdot d(x, x')}$)}
\\ & =
\delta
& \text{(by $a = \tau^2$\fixtau{$a = (\tau d(x,x'))^2$})}
\end{align*}
Therefore the randomized algorithm $\alg_\lambda$ provides 
$(\xi, \delta)$-\PXDP{}.
\myqed
\end{proof}

\arxiv{
\begin{restatable}[\PXDP{} $\Rightarrow$ \XDP{}]
{lem}{PXDPimpliesXDP}
\label{lem:PXDPimpliesXDP}
Let $\lambda\in\Dists\calr$, $\alg_\lambda: \calx \rightarrow \Dists\caly$,
$\xi: \calx\times\calx\rightarrow\realsnng$, and $\delta: \calx\times\calx\rightarrow[0,1]$.
If $\alg_\lambda$ provides $(\xi, \delta)$-\PXDP{}, it provides $(\xi, \delta)$-\XDP{}.
\end{restatable}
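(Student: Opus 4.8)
The plan is to prove the implication by the standard decomposition of the output space into a \emph{good} region, where the pointwise privacy loss is controlled by $\xi$, and a \emph{bad} region, whose probability mass is absorbed into the additive term $\delta$. Because the privacy loss here carries shared randomness, the decomposition must be carried out over the joint space $\calr\times\caly$ rather than over $\caly$ alone, and then re-marginalized over $r\sim\lambda$ at the end.

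First I would fix arbitrary $x, x'\in\calx$ and $S\subseteq\caly$, and unfold the definition of $\alg_\lambda$ to write $\alg_\lambda(x)[S] = \sum_{r\in\calr}\lambda[r]\sum_{y\in S}\alg_r(x)[y]$. I then define the bad set $B = \{(r,y)\in\calr\times\caly : \ln(\alg_r(x)[y]/\alg_r(x')[y]) > \xi(x,x')\}$, i.e., the set of $(r,y)$ on which the pointwise privacy loss $\Loss{x,x',y,r}$ exceeds $\xi(x,x')$. By the definition of the privacy loss random variable (Definition~\ref{def:privacy-loss-shared}), $r$ is drawn from $\lambda$ and $y$ from $\alg_r(x)$, so the $(r,y)$-mass of $B$ is exactly $\Prob[\Loss{x,x'} > \xi(x,x')]$, which is at most $\delta(x,x')$ by the $(\xi,\delta)$-\PXDP{} assumption.

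Next I would split $\alg_\lambda(x)[S]$ according to membership in $B$: $\alg_\lambda(x)[S] = \sum_{(r,y): y\in S, (r,y)\in B}\lambda[r]\alg_r(x)[y] + \sum_{(r,y): y\in S, (r,y)\notin B}\lambda[r]\alg_r(x)[y]$. The first sum is bounded above by the total mass of $B$, hence by $\delta(x,x')$. For the second sum, on each $(r,y)\notin B$ we have $\alg_r(x)[y]\le e^{\xi(x,x')}\alg_r(x')[y]$ by definition of $B$; dropping the restriction $(r,y)\notin B$ only enlarges the sum, so the second term is at most $e^{\xi(x,x')}\sum_{r}\lambda[r]\sum_{y\in S}\alg_r(x')[y] = e^{\xi(x,x')}\alg_\lambda(x')[S]$. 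Combining the two bounds yields exactly $\alg_\lambda(x)[S]\le e^{\xi(x,x')}\alg_\lambda(x')[S] + \delta(x,x')$, which is the desired $(\xi,\delta)$-\XDP{} inequality.

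The only delicate point---and the main thing to get right---is the treatment of the shared randomness and of outputs with zero denominator. Since the pointwise loss is defined per fixed $r$, the good/bad split lives in $\calr\times\caly$, and I must verify that after relaxing the good-region restriction I can legitimately re-marginalize over $r\sim\lambda$ to recover $\alg_\lambda(x')[S]$; this is immediate because $\sum_{r}\lambda[r]\sum_{y\in S}\alg_r(x')[y] = \alg_\lambda(x')[S]$ by the definition of $\alg_\lambda$. Any $(r,y)$ with $\alg_r(x')[y]=0$ has infinite privacy loss and therefore automatically lies in $B$, so such pairs are correctly charged to $\delta$ rather than to the multiplicative term; finiteness of $\calr$ and $\caly$ makes all sums well defined and sidesteps measure-theoretic subtleties.
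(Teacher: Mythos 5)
Your proof is correct and takes essentially the same approach as the paper's: both decompose the output space into a good region, where the pointwise privacy loss is at most $\xi(x,x')$ and the multiplicative bound applies, and a bad region whose probability mass is charged to $\delta(x,x')$ via the \PXDP{} assumption, followed by re-marginalization over $r\sim\lambda$. The only difference is bookkeeping: you define a single bad set $B\subseteq\calr\times\caly$, whereas the paper uses the equivalent per-$r$ bad sets $S'_r\subseteq S$; the resulting bounds are identical.
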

}
\conference{\PXDPimpliesXDP*}
\begin{proof}
Assume that $\alg_\lambda$ provides $(\xi, \delta)$-\PXDP{}.
Let $x, x'\in\calx$.
By the definition of $(\xi, \delta)$-\PXDP{}, we have 
$\Prob[\, \Loss{x,x'} > \xi(x, x') \,] \le~\delta(x, x')$.
Let $S \subseteq \caly$.
For each $r\in\calr$, let
$S'_r = \{ y\in S \mid \Loss{x,x',y,r} > \xi(x, x') \}$.
Then $\sum_{r} \lambda[r] \alg_r(x)[S'_r] \leq \delta(x, x')$ and 
for each $r\in\calr$, 
\conference{$\alg_r(x)[S \setminus S'_r] \leq \exp(\xi(x, x')) \cdot \alg_r(x')[S \setminus S'_r]$.}
\arxiv{
\[
\alg_r(x)[S \setminus S'_r] \leq \exp(\xi(x, x')) \cdot \alg_r(x')[S \setminus S'_r].
\]
}
Hence:
\begin{align*}
\alg_\lambda(x)[S]
&= 
\arxiv{{\textstyle \sum_{r}\,} \lambda[r] \alg_r(x)[S]
\\ &=} 
{\textstyle \sum_{r}\,} \lambda[r] \alg_r(x)[S \setminus S'_r] +
{\textstyle \sum_{r}\,} \lambda[r] \alg_r(x)[S'_r]
\\ &\leq
\Bigl( {\textstyle \sum_{r}\,} \lambda[r] \exp(\xi(x, x')) \cdot \alg_r(x')[S \setminus S'_r] \Bigr)
+ \delta(x, x')
\\ &\leq
\exp(\xi(x, x')) \cdot \Bigl( {\textstyle \sum_{r}\,} \lambda[r] \alg_r(x')[S] \Bigr)
+ \delta(x, x')
\\ &\leq
\exp(\xi(x, x')) \cdot \alg_\lambda(x')[S] + \delta(x, x')
{.}
\end{align*}
Therefore $\alg_\lambda$ provides $(\xi, \delta)$-\XDP{}.
\myqed
\end{proof}

To prove the \CXDP{} of 
LSHRR, we show that the Hamming distance between hash values follows a binomial distribution.

\begin{restatable}[Distribution of the Hamming distance of LSH]{lem}{SubGaussianHammingDistance}
\label{lem:Distribution:HammingDistanceOfLSH}
Let $\calh$ be an LSH scheme w.r.t. a metric $\dX$ over $\calx$ coupled with a distribution $D_\calh$.
Let $\bmx, \bmx' \in\calx$ be any two inputs, and
$Z$ be the random variable of the Hamming distance between their $\kappa$-bit hash values,
i.e., $Z = \dV(H(\bmx), H(\bmx'))$
where a $\kappa$-bit LSH function $H$ is drawn from the distribution $D_\calh^\kappa$.
Then $Z$ follows the binomial distribution with mean $\kappa \dX(\bmx, \bmx')$ and variance $\kappa \dX(\bmx, \bmx')(1 - \dX(\bmx, \bmx'))$.
\end{restatable}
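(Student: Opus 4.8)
The plan is to decompose the Hamming distance into a sum of per-bit disagreement indicators and recognize this sum as a count of independent Bernoulli successes. By the definition of the $\kappa$-bit hash and of $\dV$, we have $Z = \dV(H(\bmx), H(\bmx')) = \sum_{i=1}^{\kappa} |h_i(\bmx) - h_i(\bmx')|$, where $H = (h_1, \ldots, h_\kappa)$ with each $h_i$ drawn independently from $\distH$. First I would observe that since $h_i(\bmx), h_i(\bmx') \in \{0,1\}$, the term $B_i \eqdef |h_i(\bmx) - h_i(\bmx')|$ is exactly the indicator of the event $h_i(\bmx) \neq h_i(\bmx')$; in particular $B_i \in \{0,1\}$ and $Z = \sum_{i=1}^{\kappa} B_i$.

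Next I would identify the distribution of each $B_i$. By the defining property of LSH (Definition~\ref{def:LSH}), $\Pr_{h_i \sim \distH}[\, h_i(\bmx) \neq h_i(\bmx') \,] = \dX(\bmx, \bmx')$, so each $B_i$ is a Bernoulli random variable with success probability $p \eqdef \dX(\bmx, \bmx')$. Crucially, because the hashes $h_1, \ldots, h_\kappa$ are chosen independently (the $\kappa$-bit function $H$ is drawn from the product distribution $\distH^\kappa$), the indicators $B_1, \ldots, B_\kappa$ are mutually independent, hence i.i.d.\ Bernoulli($p$). Therefore $Z$, being a sum of $\kappa$ i.i.d.\ Bernoulli($p$) trials, is exactly $\Bino(\kappa, p)$.

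Finally I would read off the moments from the standard binomial formulas: $\expect[Z] = \kappa p = \kappa \dX(\bmx, \bmx')$ and $\var[Z] = \kappa p(1-p) = \kappa \dX(\bmx, \bmx')(1 - \dX(\bmx, \bmx'))$, which is the claim. There is essentially no hard step here: the lemma is a direct consequence of the LSH collision property combined with the independence of the $\kappa$ coordinates. The only point requiring mild care is to justify that $|h_i(\bmx) - h_i(\bmx')|$ is genuinely a $\{0,1\}$-valued disagreement indicator, using that each $h_i$ is a one-bit hash, so that the per-coordinate contributions really are Bernoulli trials rather than arbitrary nonnegative terms; once this is noted, the binomial conclusion and its moments follow immediately.
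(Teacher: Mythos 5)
Your proof is correct and follows essentially the same route as the paper's: decomposing $Z = \sum_{i=1}^{\kappa} |h_i(\bmx) - h_i(\bmx')|$ into per-bit disagreement indicators and recognizing the sum as binomial. In fact you spell out two points the paper leaves implicit — that each indicator is Bernoulli with parameter $\dX(\bmx,\bmx')$ by the defining LSH property, and that independence of the $h_i$ makes the indicators i.i.d.\ — so your write-up is a slightly more careful version of the same argument.
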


\begin{proof}
By the definition of the Hamming distance $\dV$ and the construction of the LSH-based $\kappa$-bit function $H$, we have
$\dV(H(\bmx), \allowbreak H(\bmx'))
= \sum_{i=1}^{\kappa} |\, h_i(\bmx) - h_i(\bmx') \,|$.
Since $\sum_{i=1}^{\kappa} |\, h_i(\bmx) - h_i(\bmx') \,|$ represents the number of non-collisions between hash values of $\bmx$ and $\bmx'$,
it follows the binomial distribution 
with mean $\kappa \dX(\bmx, \bmx')$ and variance $\kappa \dX(\bmx, \bmx')(1 - \dX(\bmx, \bmx'))$.
\myqed
\end{proof}

\CXDPofMechanism*
\begin{proof}
For a $\kappa$-bit LSH function $H \in \calh^\kappa$, 
\begin{align*}
Q_H(\bmx)[y] 
&=
\Qbrr(H(\bmx))[y]
\\ &\le 
e^{\varepsilon\dV(H(\bmx), H(\bmx'))} \Qbrr(H(\bmx'))[y]
\hspace{3ex} \text{(by Proposition~\ref{prop:XDPofRAPPOR})}
\\ &=
e^{\varepsilon\dV(H(\bmx), H(\bmx'))} Q_H(\bmx')[y]
{.}
\end{align*}
Let $Z$ be the random variable defined by
$Z \eqdef \dV(H(\bmx), H(\bmx'))$
where $H = (h_1, h_2, \ldots, h_\kappa)$ is distributed over $\calh^\kappa$, namely, the seeds of these LSH functions are chosen randomly.
Then $0 \leq Z \leq \kappa$.
By Lemma~\ref{lem:Distribution:HammingDistanceOfLSH},
$Z$ follows the binomial distribution with mean $\expect[Z] = \kappa \dX(\bmx, \bmx')$.
Then the random variable $\varepsilon Z - \expect[\varepsilon Z]$ is centered, i.e., $\expect[\varepsilon Z - \expect[\varepsilon Z]] = 0$, and ranges over $[-\varepsilon \kappa \dX(\bmx, \bmx'), \varepsilon \kappa (1 - \dX(\bmx, \bmx'))]$.
Hence it follows from Hoeffding's lemma
\arxiv{(Lemma~\ref{lem:Hoeffding})}
that: 
\[\expect[ \exp(t(\varepsilon Z - \expect[\varepsilon Z])) ] \le
\exp\bigl({\textstyle\frac{t^2}{8} \bigl(\varepsilon \kappa \bigr)^2}\bigr)
=
\exp\bigl({\textstyle\frac{t^2}{2} \bigl(\frac{\varepsilon \kappa}{2} \bigr)^2}\bigr)
{.}
\]
Hence by definition, $\varepsilon Z - \expect[\varepsilon Z]$ is $\frac{\varepsilon \kappa}{2}$-subgaussian.
Therefore, the LSH-based mechanism $\Qlsh$ provides $(\varepsilon \kappa,\frac{\varepsilon \kappa}{2},\dX)$-\CXDP{}.
\myqed
\end{proof}

\generalXDPofMechanism*
\begin{proof}
Let $\alpha = \sqrt{\frac{-\ln\delta}{2\kappa}}$.
Let $Z$ be the random variable defined by
$Z \eqdef \dV(H(\bmx), \allowbreak H(\bmx'))$
where $H = (h_1, h_2, \ldots, h_\kappa)$ is distributed over $\calh^\kappa$.
By Lemma~\ref{lem:Distribution:HammingDistanceOfLSH},
$Z$ follows the binomial distribution with mean $\expect[Z] = \kappa \dX(\bmx, \bmx')$.
\arxiv{Hence it follows from Chernoff-Hoeffding theorem (Lemma~\ref{lem:Chernoff-Hoeffding}) that:
\[
\Pr[ Z \ge \kappa(\dX(\bmx, \bmx') + \alpha) ] \le
\exp\bigl( - 2 \kappa \alpha^2 \bigr) = \delta
{.}
\]
}
\conference{
Hence it follows from Chernoff-Hoeffding theorem that
$\Pr[ Z \ge \kappa(\dX(\bmx, \bmx') + \alpha) ] \le
\exp\bigl( - 2 \kappa \alpha^2 \bigr) = \delta$.}
Hence 
$\Pr[ \varepsilon Z \ge \varepsilon \kappa \dX(\bmx, \bmx') + \varepsilon' \sqrt{\kappa} ] \le \delta$.
Therefore $\Qlsh$ provides $(\xi, \delta)$-\PXDP{}.
By Lemma~\ref{lem:PXDPimpliesXDP}, $\Qlsh$ provides $(\xi, \delta)$-\XDP{}.
\myqed
\end{proof}

\tighterXDPofMechanism*
\begin{proof}
Let $Z$ be the random variable defined by
$Z \eqdef \dV(H(\bmx), \allowbreak H(\bmx'))$
where $H = (h_1, h_2, \ldots, h_\kappa)$ is distributed over $\calh^\kappa$.
\conference{By Chernoff-Hoeffding theorem,
$\Pr[ Z \ge \kappa(\dX(\bmx, \bmx') + \alpha) ] \le \delta_\alpha(\bmx, \bmx')$.
}
\arxiv{By Chernoff-Hoeffding theorem (Lemma~\ref{lem:Chernoff-Hoeffding}),
\[
\Pr[ Z \ge \kappa(\dX(\bmx, \bmx') + \alpha) ] \le \delta_\alpha(\bmx, \bmx')
{.}
\]
}
Then 
$\Pr[ \varepsilon Z \ge \xi_\alpha(\bmx, \bmx') ] \le \delta_\alpha(\bmx, \bmx')$.
Therefore $\Qlsh$ provides $(\xi_\alpha, \delta_\alpha)$-\PXDP{}.
By Lemma~\ref{lem:PXDPimpliesXDP}, $\Qlsh$ provides $(\xi_\alpha, \delta_\alpha)$-\XDP{}.
\myqed
\end{proof}

\generalXDPofLapLSH*
\begin{proof}
Since the application of an LSH function is post-processing, the proposition follows from the \XDP{} of the Laplace mechanism.
\myqed
\end{proof}

}

\end{document}